\theoremstyle{plain}  \newtheorem{lemma}{Lemma}
\theoremstyle{plain}  
\theoremstyle{plain}
\begin{document}



\title{Wireless Powered Cooperative Jamming for Secure OFDM System}

\author{Guangchi~Zhang, Jie~Xu, Qingqing~Wu, Miao~Cui, Xueyi~Li, and Fan~Lin
\thanks{G. Zhang, J. Xu, M. Cui, and X. Li are with the School of Information Engineering, Guangdong University of Technology, Guangzhou, China (e-mail: gczhang@gdut.edu.cn, jiexu@gdut.edu.cn, single450@163.com, leexueyi@gdut.edu.cn). J. Xu is the corresponding author.  

Q. Wu is with the Department of Electrical and Computer Engineering, National University of Singapore (e-mail: elewuqq@nus.edu.sg).
	
F. Lin is with Guangzhou GCI Science \& Technology Co., Ltd., Guangzhou, China (e-mail: linfan@chinagci.com). }}

\date{}

\markboth{}%
{}

\maketitle

\begin{abstract}
This paper studies the secrecy communication in an orthogonal frequency division multiplexing (OFDM) system, where a source sends confidential information to a destination in the presence of a potential eavesdropper. We employ wireless powered cooperative jamming to improve the secrecy rate of this system with the assistance of a cooperative jammer, which works in the harvest-then-jam protocol over two time-slots. In the first slot, the source sends dedicated energy signals to power the jammer; in the second slot, the jammer uses the harvested energy to jam the eavesdropper, in order to protect the simultaneous secrecy communication from the source to the destination. In particular, we consider two types of receivers at the destination, namely Type-I and Type-II receivers, which do not have and have the capability of canceling the (a-priori known) jamming signals, respectively. For both types of receivers, we maximize the secrecy rate at the destination by jointly optimizing the transmit power allocation at the source and the jammer over sub-carriers, as well as the time allocation between the two time-slots. First, we present the globally optimal solution to this problem via the Lagrange dual method, which, however, is of high implementation complexity. Next, to balance tradeoff between the algorithm complexity and performance, we propose alternative low-complexity solutions based on minorization maximization and heuristic successive optimization, respectively. Simulation results show that the proposed approaches significantly improve the secrecy rate, as compared to benchmark schemes without joint power and time allocation.
\end{abstract}

\begin{IEEEkeywords}
Physical layer security, wireless powered cooperative jamming, OFDM system, joint power and time allocation.
\end{IEEEkeywords}

\section{Introduction}
With recent technical advancements in Internet of things (IoT), future wireless networks are envisioned to incorporate billions of low-power wireless devices to enable various industrial and commercial applications \cite{Rico2016}. How to ensure the confidentiality of these devices' wireless communication against illegitimate eavesdropping attacks is becoming an increasingly important task for cyber-physical security. However, this task is particularly challenging, as conventional key-based cryptographic techniques are difficult to be implemented due to the broadcast nature of wireless communications. To overcome this issue, physical layer security has emerged as a viable anti-eavesdropping solution at the physical layer \cite{Khisti2010, Mukherjee2014, Li2011}. The key design objective in physical-layer security is to maximize the so-called secrecy rate, which is defined as the communication rate of a wireless channel, provided that eavesdroppers cannot overhear any information from this channel. 

In the literature, there have been various approaches proposed to improve the secrecy rate. For example, one widely adopted approach is based on the idea of artificial noise (AN) (see, e.g., \cite{Goel2008, Khandaker2015}). In this approach, wireless transmitters send a combined version of both confidential information signals and AN, where the AN acts as jamming signals to interfere with eavesdroppers, thus avoiding the information leakage. Another celebrated approach is called cooperative jamming (see, e.g., \cite{Huang2011, QiangLi2015, Luo2013}), where external network nodes cooperatively send jamming signals to disrupt the eavesdropping, thus helping protect the confidential information communication. As compared to the AN-based approach, cooperative jamming is able to further improve the secrecy rate by exploiting the cooperation diversity among different nodes. Cooperative jamming is also expected to have more abundant applications in the IoT era, where massive low-power wireless devices can cooperate in jamming to improve the network security. For instance, some idle devices in wireless networks can act as cooperative jammers to help ensuring the secrecy communication of other actively communicating devices. 

Nevertheless, the practical implementation of cooperative jamming in IoT networks is hindered by the low-power nature of wireless devices, since cooperative jamming will consume energy on these devices and thus they may prefer keeping idle to save energy instead of involving in the cooperation. To overcome this issue, a new efficient method, namely wireless powered cooperative jamming, has been proposed in \cite{Liu2016, YingBi2016, Xing2015, Xing2016} motivated by the recent success of wireless information and power transfer via radio frequency (RF) signals \cite{ZhangRui2013, Zhu2016, Xu2014, Xu2014b, XiaMinghua2015, Ng2014, Bi2015, Zeng2015, Krikidis2014, Wu2016, Wu2017, Xu2016, Luo2015}.\footnote{It is worth noting that in addition to the far-field RF-based wireless power transfer, magnetic induction is a widely used near-field wireless power transfer technique for charging electronic devices \cite{Krikidis2014,Wu2017}. However, the magnetic induction has a limited operating range of less than one meter in general, which is much shorter than that of the RF-based wireless power transfer in the order of several meters. Therefore, RF-based wireless power transfer is expected to have more abundant applications to charge low-power IoT devices in a wide range, and thus is considered here in the wireless powered cooperative jamming systems.} In this method, the cooperative jamming is powered by the wireless energy transferred from external wireless transmitters, and does not require cooperative jammers to consume their own energy. Therefore, wireless powered cooperative jamming is a promising solution to inspire low-power IoT devices to cooperate in the jamming. In \cite{Liu2016, YingBi2016}, wireless powered cooperative jamming was employed to secure a point-to-point communication system in the presence of an eavesdropper, where a cooperative jammer operates in an accumulate-and-jam protocol by first harvesting the wireless energy and storing in the battery over multiple blocks and then using the accumulated energy for cooperative jamming. The long-term secrecy performance is optimized by adjusting jamming parameters while taking into account the channel and battery dynamics over time. In \cite{Xing2015, Xing2016}, wireless powered cooperative jamming was used in a secrecy two-way relaying communication system, where an eavesdropper aims to intercept the communicated information at the second hop, and more than one cooperative jammers operate in a harvest-then-jam protocol for cooperative jamming: in the first slot, the jammers harvest the wireless energy from the source, while in the second slot, they use the harvested energy to cooperatively jam the eavesdroppers. As the harvested energy is immediately used in the following slot, the harvest-then-jam protocol does not require large-capacity energy storages nor sophisticated energy management at cooperative jammers. For this reason, it is generally much easier to be implemented in practice than the accumulate-and-jam protocol. 

In this paper, we consider wireless powered cooperative jamming to secure a point-to-point communication system from a source to a destination with the presence of a potential eavesdropper. Different from prior works considering single-carrier systems, we focus on the multi-carrier orthogonal frequency division multiplexing (OFDM) system, which offers the following advantages. First, note that the wireless transmission must meet the transmit power spectrum density constraints imposed by regulatory authorities. In this case, the transferred power over a narrow-band system is often limited. By contrast, using OFDM over a wideband wireless power transfer system and exploiting the channel diversity over frequency can help deliver more power to intended receivers. On the other hand, as OFDM has been widely adopted in major existing and future wireless communication networks, using it here can also help better integrate wireless power transfer and wireless communication for future wireless networks (see, e.g., \cite{Wu2017, Zhou2014, Renna2012, Zhang2016, Wang2013} and references therein). The cooperative jammer works in a harvest-then-jam protocol to help the secrecy communication by dividing each transmission block into two time-slots: in the first slot, the source sends dedicated energy signals to power the jammer; while in the second slot, the jammer uses the harvested energy to interfere with the eavesdropper to protect the confidential information transmission. 

In general, there exists a tradeoff in the time allocation between the two slots to optimize the performance of secrecy communication, i.e., while a longer WPT time in the first slot can transfer more energy to increase the jamming power for better confusing the eavesdropper, it can also reduce the efficient wireless information transmission (WIT) time in the second slot for delivering confidential data. Therefore, in order to improve the secrecy rate at the destination by maximally exploring the benefit of wireless power cooperative jamming, it is important to jointly design the time allocation, together with the transmit power allocation at the source and the jammer over sub-carriers, by taking into account the energy harvesting constraint at the jammer. We maximize the secrecy rate via joint time and power allocation by particularly considering two types of receivers at the destination, namely Type-I and Type-II receivers, which do not have and have the capability of canceling the (a-priori known) jamming signals, respectively (see Section II for the details). Under both receiver types, however, the two joint time and power allocation problems are non-convex and usually difficult to be solved. To tackle such challenges, we propose to recast each problem into a two-layer form, in which the outer layer corresponds to a single-variable time allocation problem and the inner layer is a sub-carrier transmit power allocation problem under given time allocation. The outer layer time allocation problem is solved via a one-dimension search. As for the inner-layer power allocation problem, we first present the globally optimal solution via the Lagrange dual method, which, however, is of high implementation complexity. Next, to balance the tradeoff between the implementation complexity and the performance, we further develop two suboptimal solutions based on minorization maximization and heuristic successive optimization, respectively. Simulation results show that the proposed approaches achieve significantly higher secrecy rate than benchmark schemes without joint time and power allocation, and the minorization maximization based suboptimal solution achieves a near optimal performance as compared to the optimal solution. 

It is worth noting that in the literature, there have been several existing works \cite{Renna2012, Zhang2016, Wang2013} investigating the physical layer security over OFDM systems. For example, the secrecy rate of OFDM systems was investigated in \cite{Renna2012} under a Rayleigh fading channel setup without using AN or cooperative jamming. In \cite{Zhang2016} and \cite{Wang2013}, the AN-based approach and cooperative jamming were considered to improve the secrecy rate of OFDM systems, respectively. Different from these prior studies, in this paper the cooperative jamming is powered by WPT, and thus requires a more sophisticated design with joint time and power allocation for both WPT and jamming. This is new and has not been addressed. 

The remainder of the paper is organized as follows. Section II presents the system model and problem formulation. Sections III and IV propose three efficient approaches to obtain solutions to the two joint time and power allocation problems with Type-I and Type-II destination receivers, respectively. Section V presents simulation results to validate the performance of our proposed joint design as compared to other benchmark schemes. Finally, Section VI concludes this paper.

\begin{figure}
	\centering
	\includegraphics[width=0.48\textwidth]{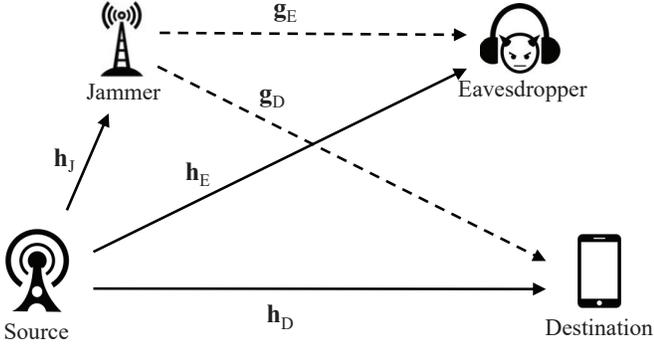}
	\caption{A secure OFDM communication system with wireless powered cooperative jamming, where a source sends confidential information to a destination in the presence of a potential eavesdropper, and a cooperative jammer uses the harvested wireless energy from the source to jam the eavesdropper against its eavesdropping.}
	\label{fig1}
\end{figure}

\section{System Model and Problem Formulation}
\subsection{System Model}
As shown in Fig. \ref{fig1}, we consider secrecy communication in an OFDM system with a source communicating with a destination in the presence of a potential eavesdropper. We employ wireless powered cooperative jamming to secure this system, where a cooperative jammer uses the transferred energy from the source to help jam the eavesdropper against its eavesdropping. Suppose that the OFDM system consists of a total of $N$ orthogonal sub-carriers, and denote the set of sub-carriers as $\mathcal{N} \triangleq \{ 1, 2, \ldots, N \}$. We consider a block-based quasi-static channel model by assuming that the wireless channels remain constant over each transmission block and may change from one block to another. We focus on one particular block with a length of $T$, and denote $\mathbf{h}_{\text{J}} \triangleq [ h_{\text{J},1} , \ldots, h_{\text{J},N}]^\dagger$, $\mathbf{h}_{\text{D}} \triangleq [ h_{\text{D},1} , \ldots, h_{\text{D},N}]^\dagger$, $\mathbf{h}_{\text{E}} \triangleq [ h_{\text{E},1} , \ldots, h_{\text{E},N}]^\dagger$, $\mathbf{g}_{\text{D}} \triangleq [ g_{\text{D},1} , \ldots, g_{\text{D},N}]^\dagger$, $\mathbf{g}_{\text{E}} \triangleq [ g_{\text{E},1} , \ldots, g_{\text{E},N}]^\dagger$ as the vectors collecting the channel coefficients of all the $N$ sub-carriers from the source to the jammer, from the source to the destination, from the source to the eavesdropper, from the jammer to the destination, from the jammer to the eavesdropper, respectively. Here, the superscript $\dagger$ denotes the transpose operation. It is assumed that the source, destination, and the cooperative jammer perfectly know the global channel state information (CSI) $\mathbf{h}_{\text{J}}$, $\mathbf{h}_{\text{D}}$, $\mathbf{h}_{\text{E}}$, $\mathbf{g}_{\text{D}}$, and $\mathbf{g}_{\text{E}}$ in order to obtain the performance upper bound of the wireless powered cooperative jamming system. Specifically, the CSI $\mathbf{h}_{\text{J}}$, $\mathbf{h}_{\text{D}}$, and $\mathbf{g}_{\text{D}}$ associated with these users can be obtained via efficient channel estimation and feedback among them, while $\mathbf{g}_{\text{E}}$ and $\mathbf{h}_{\text{E}}$ can be obtained by monitoring the possible transmission activities of the eavesdropper, as commonly assumed in the physical-layer security literature \cite{Khisti2010, Li2011, Goel2008, Khandaker2015, Huang2011}. Note that in practice the CSI acquisition may consume additional energy at the cooperative jammer, and the obtained CSI may not be perfect due to channel estimation and feedback errors. However, how to address these issues in practice is left for future work.

We consider a harvest-then-jam protocol for the cooperative jammer by dividing each transmission block into two time-slots with lengths $\alpha_1 T$ and $\alpha_2 T$, respectively, where $\alpha_1$ and $\alpha_2$ denote the portions of the two time-slots with 
\begin{equation}  \label{EquTimeCon}
\alpha_1 + \alpha_2 = 1, \; \; 0 \le \alpha_1 \le 1, \; 0 \le \alpha_2 \le 1.
\end{equation} 
In the first time-slot, the source sends wireless energy to power the cooperative jammer; while in the second time-slot, the source transmits confidential information to the destination and simultaneously the jammer uses the harvested energy in the first time-slot to cooperate in jamming the eavesdropper against its eavesdropping. The detailed operation in the two slots is presented in the following, respectively. 

First, consider the WPT from the source to the jammer in the first time-slot. Over each sub-carrier $n$, let $s_{\text{PT},n}$ denote the energy signal transmitted by the source, which is assumed to be a random variable with variance $\mathbb{E}( | s_{\text{PT},n} |^2 ) = p_{\text{PT},n} $. Here, $p_{\text{PT},n}$ denotes the transmit power for WPT at the source over the sub-carrier $n$, and $\mathbb{E}(\cdot)$ denotes the statistic expectation. The harvested energy by the jammer is
\begin{equation}  \label{EquEEH}
E_{\text{EH}} = \alpha_1 T \eta \sum_{n=1}^{N} p_{\text{PT},n} | h_{\text{J},n} |^2,
\end{equation}
where $0 < \eta \leq 1$ denotes the energy harvesting efficiency at the jammer. Note that similarly as in \cite{ZhangRui2013,Zhu2016,Xu2014,Xu2014b,XiaMinghua2015,Ng2014,Bi2015,Zeng2015}, we adopt a linear energy harvesting model in \eqref{EquEEH} by considering the harvested power at the jammer lies in the linear regime of the energy harvester. In the literature, there have been various works \cite{Boshkovska2015,Moghadam2017,Zeng2017,Clerckx2016} investigating the wireless power transfer by considering the non-linearity of the energy harvester, while how to extend the wireless powered cooperative jamming into such a scenario is left for future work.

Next, consider the cooperative jamming in the second time-slot. Over the sub-carrier $n$, let $s_{\text{IT},n}$ and $s_{\text{J},n}$ denote the confidential information signal transmitted by the source and the jamming signal transmitted by the jammer, respectively. The received signals by the destination and the eavesdropper over the sub-carrier $n$ are respectively denoted as
\begin{equation}  \label{EquYD}
y_{ \text{D}, n } = h_{\text{D},n} s_{\text{IT},n} + g_{\text{D},n} s_{\text{J},n} + n_{\text{D},n},
\end{equation}
\begin{equation}   \label{EquYE}
y_{ \text{E}, n } = h_{\text{E},n} s_{\text{IT},n} + g_{\text{E},n} s_{\text{J},n} + n_{\text{E},n},
\end{equation}
where $n_{\text{D},n}$ and $n_{\text{E},n}$ denote the Gaussian noise at the receivers of the destination and the eavesdropper with mean zero and variances $\sigma_{\text{D}}^2$ and $\sigma_{\text{E}}^2$, respectively. Assume that Gaussian signaling is employed for both $s_{\text{IT},n}$ and $s_{\text{J},n}$, which are thus cyclic symmetric complex Gaussian (CSCG) random variables with mean zero and variances $\mathbb{E}( | s_{\text{IT},n} |^2 ) = p_{\text{IT},n}$ and $\mathbb{E}( | s_{\text{J},n} |^2 ) = p_{\text{J},n}$, with $p_{\text{IT},n}$ and $p_{\text{J},n}$ denoting the transmit power of the source and the jamming power of the jammer over the sub-carrier $n$, respectively. Let $P_S$ denote the maximum transmit sum power of the source over all sub-carriers, and $P_{\text{S,peak}}$ denote the peak transmit power of the source over each sub-carrier. Then we have 
\begin{subequations}   \label{EquSMaxPowCon}
\begin{align}
& \alpha_1 \sum_{n=1}^{N} p_{\text{PT},n} + \alpha_2 \sum_{n=1}^N p_{\text{IT},n} \leq P_S,  \\
& 0 \leq p_{\text{PT},n}, p_{\text{IT},n}  \leq P_{\text{S,peak}},  \; n \in \mathcal{N}.
\end{align}
\end{subequations}
As for the jammer, as it uses the harvested wireless energy $E_{\text{EH}}$ in \eqref{EquEEH} in the first time-slot to supply the cooperative jamming in the second time-slot, it is subject to the energy harvesting constraint: the total energy used for jamming in the second time-slot cannot exceed $E_{\text{EH}}$, i.e.,
\begin{subequations}    \label{EquJMaxPowCon}
\begin{align}
& \alpha_2 T \sum_{n=1}^N p_{\text{J},n} \leq E_{\text{EH}} = \alpha_1 T \eta \sum_{n=1}^{N} p_{\text{PT},n} | h_{\text{J},n} |^2, \\
& 0 \leq p_{\text{J},n}  \leq P_{\text{J,peak}}  , \; n \in \mathcal{N},
\end{align}
\end{subequations}
where $P_{\text{J,peak}}$ denotes the peak transmit power of the jammer over each sub-carrier.

In particular, we consider two types of receivers at the destination, namely Type-I and Type-II receivers \cite{Xu2014}, which do not have and have the capability of canceling the jamming signals $s_{\text{J},n}$'s from the jammer, respectively. In order for a Type-II receiver to successfully cancel the jamming signals, such signals should be securely shared between the jammer and the destination before the cooperative jamming \cite{Zhang2016, Xu2014, Liu2014, Xing2016b}. This can be practically implemented as follows \cite{Xing2016b}. First, the same jamming signal generators and seed tables are pre-stored at both the jammer and destination (but not available at the eavesdropper). Next, before each transmission phase, one seed is randomly chosen from the seed table and the index of this seed is shared between the jammer and destination. In particular, the two-step phase-shift modulation-based method in \cite{Xing2016b} can be applied for the seed index sharing as follows. In the first step, the destination sends a pilot signal for the jammer to estimate the channel phase between the destination and jammer. In the second step, the jammer randomly chooses a seed index, and modulates it over the phase of the transmitted signal after pre-compensating the channel phase that it estimated in the previous step. The destination is able to decode the seed index sent by the jammer from the phases of the received signal. Since the length of this seed index sharing procedure is very short and the channel phase between the destination and jammer is different from that between the destination/jammer and the eavesdropper, the eavesdropper does not know the channel phase between the destination and jammer, and thus is not able to decode the signal containing the seed index in such a short time period. For Type-I and Type-II receivers, the secrecy rates of the secure OFDM system over the $N$ sub-carriers are respectively given by
\begin{equation}   \label{EquRate1}
R_{\text{sec}}^{(\text{I})} = \sum_{n=1}^N  \left[ R_{\text{SD},n}^{(\text{I})} - R_{\text{SE},n} \right]^{+},
\end{equation}
\begin{equation}  \label{EquRate2}
R_{\text{sec}}^{(\text{II})} = \sum_{n=1}^N  \left[ R_{\text{SD},n}^{(\text{II})} - R_{\text{SE},n} \right]^{+},
\end{equation}
where $[x]^{+} \triangleq \max(x,0)$. Here, $R_{\text{SD},n}^{(\text{I})}$ and $R_{\text{SD},n}^{(\text{II})}$ are the achievable rates over the sub-carrier $n$ from the source to the destination for Type-I and Type-II receivers, respectively, and $R_{\text{SE},n}$ denotes the achievable rate from the source to the eavesdropper over the sub-carrier $n$, given by
\begin{equation}
R_{\text{SD},n}^{(\text{I})} = \alpha_2  \log_2 \left( 1 + \frac{ p_{\text{IT},n} | h_{\text{D},n} |^2 } { p_{\text{J},n} | g_{\text{D},n} |^2 + \sigma_{\text{D}}^2  }  \right),
\end{equation}
\begin{equation}
R_{\text{SD},n}^{(\text{II})} = \alpha_2  \log_2 \left( 1 + \frac{ p_{\text{IT},n} | h_{\text{D},n} |^2 } { \sigma_{\text{D}}^2  }  \right),
\end{equation}
\begin{equation}
R_{\text{SE},n} = \alpha_2  \log_2 \left( 1 + \frac{ p_{\text{IT},n} | h_{\text{E},n} |^2 } { p_{\text{J},n} | g_{\text{E},n} |^2 + \sigma_{\text{E}}^2  }  \right).
\end{equation}

\subsection{Problem Formulation}
Our objective is to maximize the secrecy rates $R_{\text{sec}}^{(\text{I})}$ in \eqref{EquRate1} and $R_{\text{sec}}^{(\text{II})}$ in \eqref{EquRate2} for both types of destination receivers, subject to the transmit power constraint in \eqref{EquSMaxPowCon} at the source, the energy harvesting constraint in \eqref{EquJMaxPowCon} at the jammer, and the time constraint in \eqref{EquTimeCon}. The decision variables include the transmit power allocation $p_{\text{PT},n}$'s (for WPT) and $p_{\text{IT},n}$'s (for WIT) at the source, and the jamming power allocation $p_{\text{J},n}$'s at the jammer, as well as the time allocation $\alpha_1$ and $\alpha_2$. For Type-I receiver, we mathematically formulate the secrecy rate maximization problem as 
\begin{align}  
(\text{P}1): \max_{ \substack{ \alpha_1, \alpha_2, \\ \mathbf{p}_{\text{PT}}, \mathbf{p}_{\text{IT}}, \mathbf{p}_{\text{J}} } } & \alpha_2 \sum_{n=1}^N \bigg[  \log_2 \left( 1 + \frac{ p_{\text{IT},n} | h_{\text{D},n} |^2 } {  p_{\text{J},n} | g_{\text{D},n} |^2 + \sigma_{\text{D}}^2 }  \right)  \nonumber \\
& -  \log_2 \left( 1 + \frac{ p_{\text{IT},n} | h_{\text{E},n} |^2 } {  p_{\text{J},n} | g_{\text{E},n} |^2 + \sigma_{\text{E}}^2  }  \right)  \bigg]     \label{EquOrigProb}  \\
\text{s.t.}  \quad  & \eqref{EquTimeCon}, \; \eqref{EquSMaxPowCon}, \; \eqref{EquJMaxPowCon}, \nonumber
\end{align}
where $\mathbf{p}_{\text{PT}} \triangleq [ p_{\text{PT},1}, \ldots,p_{\text{PT},N} ]^\dagger$, $\mathbf{p}_{\text{IT}} \triangleq [ p_{\text{IT},1}, \ldots,p_{\text{IT},N} ]^\dagger$, and $\mathbf{p}_{\text{J}} \triangleq [ p_{\text{J},1},\ldots,p_{\text{J},N} ]^\dagger$. Note that in the objective function of problem (P1) we have omitted the positive operation $[\cdot]^+$, which is due to the fact that the optimal value of each summation term of the objective of problem (P1), i.e. $R_{\text{SD},n}^{(\text{I})} - R_{\text{SE},n}$, must be non-negative, and thus the problems with and without the positive operation have the same optimal value and the same optimal solution.\footnote{This fact can be proved by contradiction. If $R_{\text{SD},n}^{(\text{I})} - R_{\text{SE},n} < 0$, we can increase its value to zero by setting $p_{\text{IT},n}=0$ without violating the constraints.}

Similarly, for Type-II receiver, the secrecy rate maximization problem is formulated as
\begin{align}   
(\text{P}2): \max_{ \substack{ \alpha_1, \alpha_2, \\ \mathbf{p}_{\text{PT}}, \mathbf{p}_{\text{IT}}, \mathbf{p}_{\text{J}} } }  & \alpha_2 \sum_{n=1}^N \bigg[  \log_2 \left( 1 + \frac{ p_{\text{IT},n} | h_{\text{D},n} |^2 } { \sigma_{\text{D}}^2 }  \right) \nonumber \\
& -  \log_2 \left( 1 + \frac{ p_{\text{IT},n} | h_{\text{E},n} |^2 } {  p_{\text{J},n} | g_{\text{E},n} |^2 + \sigma_{\text{E}}^2  }  \right)  \bigg]   \label{EquOrigProbType2}  \\
\text{s.t.}  \quad  & \eqref{EquTimeCon}, \; \eqref{EquSMaxPowCon}, \; \eqref{EquJMaxPowCon}.  \nonumber
\end{align}

Note that problems (P1) and (P2) are non-convex as their objective functions are non-concave. As a result, they are difficult to solve in general. In the following two sections, we tackle such difficulties for (P1) and (P2), respectively.

\section{Solution to Problem (P1) with Type-I Destination Receiver}
First, consider problem (P1) with Type-I destination receiver. We solve this problem by formulating it in a nested form:  
\begin{equation}  \label{EquOutLayProb}
\max_{\alpha_2} \;  \alpha_2 \mathcal{R}^{(\text{I})}(\alpha_2) , \; \; \text{s.t.} \;  0 \leq \alpha_2 \leq 1,
\end{equation}
where
\begin{subequations}   \label{EquInnLayProb}
	\begin{align}
	\mathcal{R}^{(\text{I})}(\alpha_2) = & \max_{ \mathbf{p}_{\text{PT}}, \mathbf{p}_{\text{IT}}, \mathbf{p}_{\text{J}} }    \sum_{n=1}^N \bigg[  \log_2 \left( 1 + \frac{ p_{\text{IT},n} | h_{\text{D},n} |^2 } {  p_{\text{J},n} | g_{\text{D},n} |^2 + \sigma_{\text{D}}^2 }  \right) \nonumber \\
	&\;\;\;\; \quad -  \log_2 \left( 1 + \frac{ p_{\text{IT},n} | h_{\text{E},n} |^2 } {  p_{\text{J},n} | g_{\text{E},n} |^2 + \sigma_{\text{E}}^2 }  \right)  \bigg]   \label{EquInnObj}  \\
	&\text{s.t.} \;      (1- \alpha_2)  \sum_{n=1}^{N} p_{\text{PT},n} + \alpha_2 \sum_{n=1}^N p_{\text{IT},n} \leq P_S , \label{EquInnCon1} \\
	&\quad  \; \;  0 \leq p_{\text{PT},n}, p_{\text{IT},n}  \leq P_{\text{S,peak}} ,\; n \in \mathcal{N},  \label{EquInnCon2}  \\
	&\quad \; \;  \alpha_2 \sum_{n=1}^N p_{\text{J},n} \leq (1-\alpha_2) \eta \sum_{n=1}^{N} p_{\text{PT},n} | h_{\text{J},n} |^2 , \label{EquInnCon3} \\
	&\quad  \; \;  0 \leq p_{\text{J},n}  \leq P_{\text{J,peak}}  , \; n \in \mathcal{N}.  \label{EquInnCon4}
	\end{align}
\end{subequations}
Here, the outer layer problem \eqref{EquOutLayProb} corresponds to the time allocation via optimizing $\alpha_2$, while the inner layer problem \eqref{EquInnLayProb} corresponds to the joint power allocation optimization under given time allocation. We solve problem (P1) by first solving \eqref{EquInnLayProb} under any given $\alpha_2 \in [0,1]$, and then adopting a one-dimensional search over the interval $[0,1]$ to find the optimal $\alpha_2$ to solve \eqref{EquOutLayProb}. In the following, we focus on solving the non-convex inner layer problem \eqref{EquInnLayProb} under given $\alpha_2 \in [0,1]$.

\subsection{Optimal Solution to Problem \eqref{EquInnLayProb} Via The Lagrange Dual Method}   \label{SecLagrangeTypeI}
First, we present the optimal solution to problem \eqref{EquInnLayProb}. Despite the non-convexity, problem \eqref{EquInnLayProb} can be shown to satisfy the ``time-sharing'' condition defined in \cite{Yu2006} as the number of sub-carriers $N$ tends to infinity, and the duality gap is zero in this case.\footnote{It is observed in our simulations that when $N = 32$, the duality gap for problem \eqref{EquInnLayProb} is negligibly small and thus can be ignored.} Hence, we apply the Lagrange dual method \cite{Boyd2004} to find its optimal solution.

The partial Lagrangian of problem \eqref{EquInnLayProb} is
\begin{align}
& L^{(\text{I})}(\mathbf{p}_{\text{PT}}, \mathbf{p}_{\text{IT}}, \mathbf{p}_{\text{J}}, \lambda, \mu)   \nonumber \\
= &  \sum_{n=1}^N \bigg[  \log_2 \left( 1 + \frac{ p_{\text{IT},n} | h_{\text{D},n} |^2 } {  p_{\text{J},n} | g_{\text{D},n} |^2 + \sigma_{\text{D}}^2 }  \right)  \nonumber \\
& -  \log_2 \left( 1 + \frac{ p_{\text{IT},n} | h_{\text{E},n} |^2 } {  p_{\text{J},n} | g_{\text{E},n} |^2 + \sigma_{\text{E}}^2 }  \right)  \bigg]   \nonumber   \\
& +\lambda  \bigg[ P_S -   (1- \alpha_2)  \sum_{n=1}^{N} p_{\text{PT},n} - \alpha_2 \sum_{n=1}^N p_{\text{IT},n}   \bigg]   \nonumber  \\
& + \mu \bigg[ (1-\alpha_2) \eta \sum_{n=1}^{N} p_{\text{PT},n} | h_{\text{J},n} |^2 - \alpha_2 \sum_{n=1}^N p_{\text{J},n}  \bigg],
\end{align}
where $\lambda \geq 0$ and $\mu \geq 0$ are the dual variables associated with the constraints \eqref{EquInnCon1} and \eqref{EquInnCon3}, respectively. The dual function is defined as
\begin{align}
g(\lambda,\mu) =  \max_{ \mathbf{p}_{\text{PT}}, \mathbf{p}_{\text{IT}}, \mathbf{p}_{\text{J}} } & L^{(\text{I})}(\mathbf{p}_{\text{PT}}, \mathbf{p}_{\text{IT}}, \mathbf{p}_{\text{J}}, \lambda, \mu)  \nonumber  \\
\text{s.t.} \quad & \; 0 \leq p_{\text{PT},n}  \leq P_{\text{S,peak}} , \; \forall n, \nonumber \\
& \; 0 \leq p_{\text{IT},n}  \leq P_{\text{S,peak}} , \; \forall n, \nonumber \\
& \;  0 \leq p_{\text{J},n}  \leq P_{\text{J,peak}} , \; \forall n. \label{EquDualFun}
\end{align}
Then, the dual problem of \eqref{EquInnLayProb} is
\begin{equation}   \label{EquDualProb}
\min_{\lambda,\mu}   \; g(\lambda,\mu) \;  \text{s.t.}   \; \lambda \geq 0, \; \mu \geq 0.  
\end{equation}
Due to the strong duality between problem \eqref{EquInnLayProb} and the dual problem \eqref{EquDualProb}, in the following we solve problem \eqref{EquInnLayProb} by first obtaining $g(\lambda,\mu)$ under given $\lambda \geq 0$ and $\mu \geq 0$ via solving problem \eqref{EquDualFun}, and then find the optimal $\lambda$ and $\mu$ to minimize $g(\lambda,\mu)$ for solving \eqref{EquDualProb}. 

First, consider problem \eqref{EquDualFun} under any given $\lambda \geq 0$ and $\mu \geq 0$. In this case, problem \eqref{EquDualFun} can be decomposed into $2N$ subproblems as follows by removing irrelevant terms, where each subproblems in \eqref{EquProbPPT} and \eqref{EquProbPITPJ} are for one sub-carrier $n$.
\begin{align}
\max_{p_{\text{PT},n}} & \; -\lambda (1-\alpha_2) p_{\text{PT},n} + \mu (1-\alpha_2 ) \eta | h_{\text{J},n} |^2 p_{\text{PT},n}    \nonumber  \\
\text{s.t.} \; & \; 0 \leq p_{\text{PT},n}  \leq P_{\text{S,peak}} ,  \label{EquProbPPT}
\end{align}
\begin{align}
\max_{p_{\text{IT},n},p_{\text{J},n}}  & \; \log_2 \left( 1 + \frac{ p_{\text{IT},n} | h_{\text{D},n} |^2 } {  p_{\text{J},n} | g_{\text{D},n} |^2 + \sigma_{\text{D}}^2 }  \right)  \nonumber \\
 -&   \log_2 \left( 1 + \frac{ p_{\text{IT},n} | h_{\text{E},n} |^2 } {  p_{\text{J},n} | g_{\text{E},n} |^2 + \sigma_{\text{E}}^2 }  \right) - \lambda \alpha_2 p_{\text{IT},n} - \mu \alpha_2  p_{\text{J},n} \nonumber \\
\text{s.t.} \; \;\;&  \; 0 \leq p_{\text{IT},n}  \leq P_{\text{S,peak}} ,   \nonumber \\
& \;  0 \leq p_{\text{J},n}  \leq P_{\text{J,peak}}.    \label{EquProbPITPJ}
\end{align}

As for subproblem \eqref{EquProbPPT}, as the objective function is linear over $p_{\text{PT},n}$, it is evident that the optimal solution is
\begin{equation}   \label{EquOptPPT}
p_{\text{PT},n}^* = 
\begin{cases}
P_{\text{S,peak}}, & - \lambda (1-\alpha_2) + \mu (1-\alpha_2) \eta  | h_{\text{J},n} |^2> 0, \\
0, & - \lambda (1-\alpha_2) + \mu (1-\alpha_2) \eta  | h_{\text{J},n} |^2 \leq 0.
\end{cases}
\end{equation}
Note that if $- \lambda (1-\alpha_2) + \mu (1-\alpha_2) \eta  | h_{\text{J},n} |^2=0$, $p_{\text{PT},n}^* $ is not unique, and can take any arbitrary value within $[0,P_{\text{S,peak}}]$. In this case, we set $p_{\text{PT},n}^* =0$ only for solving problem \eqref{EquDualFun}, which may not be the optimal solution of $p_{\text{PT},n}$ to problem \eqref{EquInnLayProb} in general.

As for subproblem \eqref{EquProbPITPJ}, the optimization variables $p_{\text{J},n}$ and $p_{\text{IT},n}$ couple together, thus making \eqref{EquProbPITPJ} difficult to solve. To handle this issue, we first obtain the optimal $p_{\text{IT},n}$ under any given $p_{\text{J},n} \in [0,P_{\text{J,peak}}]$,  and then apply a one-dimension search to find the optimal $p_{\text{J},n}$ within $[0,P_{\text{J,peak}}]$. To find the optimal $p_{\text{IT},n}$ to solve problem \eqref{EquProbPITPJ} under given $p_{\text{J},n}$, we define
\begin{equation}    \label{Equa}
a_n \triangleq \frac{  | h_{\text{D},n} |^2 } {   p_{\text{J},n} | g_{\text{D},n} |^2 + \sigma_{\text{D}}^2  },
\end{equation}
\begin{equation}   \label{Equb}
b_n \triangleq \frac{ | h_{\text{E},n} |^2 } {  p_{\text{J},n} | g_{\text{E},n} |^2 + \sigma_{\text{E}}^2  }.
\end{equation}
When $ a_n \leq b_n$, the objective function of \eqref{EquProbPITPJ} is non-increasing with respect to $p_{\text{IT},n}$, and the optimal solution of $p_{\text{IT},n}$ should be zero. When $a_n > b_n$, the objective function of \eqref{EquProbPITPJ} is concave with respect to $p_{\text{IT},n}$, and the optimal solution can be obtained by checking its first-order derivative. Therefore, the optimal $p_{\text{IT},n}$ for problem \eqref{EquProbPITPJ} under given $p_{\text{J},n}$ is
\begin{equation}    \label{EquOptPIT}
p_{\text{IT},n}^*(p_{\text{J},n}) =
\begin{cases}
0, & a_n \leq b_n,   \\
\min \left(  \left[ p_n^{*} \right]^{+}  , \; P_{\text{S,peak}} \right), & a_n > b_n,  
\end{cases}
\end{equation}
where
\begin{align}
p_n^{*} = &   \sqrt{ \left( \frac{1}{2 b_n} - \frac{1}{2 a_n} \right)^2  + \frac{1}{\lambda \alpha_2 \ln2} \left( \frac{1}{b_n} - \frac{1}{a_n} \right) }  \nonumber  \\
&- \frac{1}{2 b_n} - \frac{1}{2 a_n}.
\end{align}
In addition, let $p_{\text{J},n}^*$ denote the optimal $p_{\text{J},n}$ to problem \eqref{EquProbPITPJ}, obtained via the one-dimensional search. Then $p_{\text{IT},n}^*(p_{\text{J},n}^*)$ becomes the optimal solution of $p_{\text{IT},n}$ for \eqref{EquProbPITPJ}, denoted by $p_{\text{IT},n}^*$. By combining them with $p_{\text{PT},n}^*$ for \eqref{EquProbPPT}, the optimal solution to \eqref{EquDualFun} under given $(\lambda, \mu)$ is found.

Next, we solve the dual problem \eqref{EquDualProb}. As this problem is convex but may not be differentiable in general, we find the optimal $(\lambda, \mu)$ by applying the ellipsoid method \cite{Boyd2004}. The required subgradients of $g(\lambda,\mu)$ with respect to $\lambda$ and $\mu$ are respectively given by
\begin{equation}   \label{EquSubgradLambda}
P_S - (1-\alpha_2) \sum_{n=1}^N p_{\text{PT},n}^* - \alpha_2 \sum_{n=1}^N p_{\text{IT},n}^*,
\end{equation}
\begin{equation}   \label{EquSubgradMu}
(1-\alpha_2) \eta \sum_{n=1}^N p_{\text{PT},n}^* |h_{\text{J},n}|^2 - \alpha_2 \sum_{n=1}^N p_{\text{J},n}^*.
\end{equation}
Therefore, the optimal solution of \eqref{EquDualProb} can be obtained as $(\lambda^*,\mu^*)$.

With the optimal dual variable $(\lambda^*,\mu^*)$ at hand, the corresponding $p_{\text{IT},n}^*$'s and $p_{\text{J},n}^*$'s, which are obtained by solving problem \eqref{EquProbPITPJ}, become the optimal solution to problem \eqref{EquInnLayProb}. Now, it remains to obtain the optimal solution of $p_{\text{PT},n}$'s for problem \eqref{EquInnLayProb}. In general, the optimal solution of $p_{\text{PT},n}$'s, denoted as $p_{\text{PT},n}^*$'s, cannot be obtained from \eqref{EquOptPPT}, since the solution is not unique if $- \lambda^* (1-\alpha_2) + \mu^* (1-\alpha_2) \eta  | h_{\text{J},n} |^2 = 0$. Fortunately, it can be shown that, given $\lambda^*$, $\mu^*$, $p_{\text{IT},n}^*$'s, and $p_{\text{J},n}^*$'s, any $p_{\text{PT},n}$'s that satisfy the constraints \eqref{EquInnCon1}, \eqref{EquInnCon2}, and \eqref{EquInnCon3} are the optimal solution to problem \eqref{EquInnLayProb}. Thus we can find $p_{\text{PT},n}^*$'s by solving the following feasibility problem:
\begin{subequations}  \label{EquFeaPPT}
\begin{align}
\text{find} \; &  \mathbf{p}_{\text{PT}}     \\
\text{s.t.} \; &  (1- \alpha_2)  \sum_{n=1}^{N} p_{\text{PT},n} + \alpha_2 \sum_{n=1}^N p_{\text{IT},n}^* \leq P_S ,  \label{EquFeaPPTCon1}  \\
&0 \leq p_{\text{PT},n}  \leq P_{\text{S,peak}} ,\; n \in \mathcal{N}, \label{EquFeaPPTCon2}  \\
& \alpha_2 \sum_{n=1}^N p_{\text{J},n}^* \leq (1-\alpha_2) \eta \sum_{n=1}^{N} p_{\text{PT},n} | h_{\text{J},n} |^2.  \label{EquFeaPPTCon3}
\end{align}
\end{subequations}
The solution of problem \eqref{EquFeaPPT} can be obtained by solving the following problem.
\begin{align}
\max_{\mathbf{p}_{\text{PT}}} & \; \sum_{n=1}^{N} p_{\text{PT},n} | h_{\text{J},n} |^2  \label{EquFindOptPPT} \\
\text{s.t.} \; & \; \eqref{EquFeaPPTCon1}, \; \eqref{EquFeaPPTCon2}. \nonumber
\end{align}
This is because any solution to problem \eqref{EquFeaPPT} is a feasible solution to problem \eqref{EquFindOptPPT}, and thus the optimal solution to \eqref{EquFindOptPPT} must be a solution to problem \eqref{EquFeaPPT}. Let $\hat{k} = \lfloor (P_S - \alpha_2 \sum_{n=1}^N p_{\text{IT},n}^* ) / [ (1-\alpha_2)P_{\text{S,peak} } ] \rfloor$, where $\lfloor x \rfloor$ denotes the largest integer lower than $x$, and denote $|\tilde{h}_{\text{J},\hat{k}+1}|$ as the $(\hat{k}+1)$th largest value in $\{| h_{\text{J},n} |\}$. The optimal solution to problem \eqref{EquFindOptPPT} is
\begin{equation}    \label{EquOptPPTFinal}
p_{\text{PT},n}^* = 
\begin{cases}
P_{\text{S,peak}}, &  |h_{\text{J},n}| > |\tilde{h}_{\text{J},\hat{k}+1}|, \\
\frac{ P_S - \alpha_2 \sum_{n=1}^N p_{\text{IT},n}^* }{1-\alpha_2} - \hat{k} P_{\text{S,peak}},  &  |h_{\text{J},n}| = |\tilde{h}_{\text{J},\hat{k}+1}|, \\
0, &  |h_{\text{J},n}| < |\tilde{h}_{\text{J},\hat{k}+1}|.
\end{cases}
\end{equation}
Using \eqref{EquOptPPTFinal}, we obtain the closed-form optimal solution of $p_{\text{PT},n}$'s to problem \eqref{EquInnLayProb}.

In summary, the overall algorithm is presented in Algorithm \ref{AlgOptType1}. Denote the required accuracy for the one-dimension search in finding $p_{\text{J},n}$ and the convergence accuracy of the ellipsoid method as $\epsilon_{\text{J}}>0$ and $\epsilon_{\text{e}}>0$, respectively. The complexity of the Algorithm \ref{AlgOptType1} for finding the optimal solution is $\mathcal{O} \left[ N \left( \frac{P_{\text{J,peak}}}{\epsilon_{\text{J}}} + 1\right) \log_2 \frac{RG}{\epsilon_{\text{e}}} \right]$, where $R$ and $G$ are the radius and Lipschitz constant of the initial ellipsoid, respectively \cite{Boyd2014}.

\begin{algorithm}  
	\caption{The Optimal Solution to Problem \eqref{EquInnLayProb}}
	\begin{algorithmic}[1] \label{AlgOptType1}
		\STATE \textbf{Initialization:} Set an initial value of  $(\lambda,\mu)$ and an initial ellipsoid.
		\REPEAT
		\STATE Under given $(\lambda,\mu)$, for each $n$, obtain $p_{\text{PT},n}^*$'s by using \eqref{EquOptPPT}, and obtain $p_{\text{IT},n}^*$'s and $p_{\text{J},n}^*$'s by using \eqref{EquOptPIT} and a one-dimension search, respectively.
		\STATE Update $(\lambda,\mu)$ by using the ellipsoid method.
		\UNTIL {the volume of the ellipsoid is less than $\epsilon_{\text{e}}$.}
		\STATE Obtain $p_{\text{PT},n}^*$'s by using \eqref{EquOptPPTFinal}.
	\end{algorithmic}
\end{algorithm}

\subsection{Minorization Maximization (MM)}  \label{SectionMMType1}
Although the Lagrange dual method can find the optimal solution, it needs an exhaustive search of $p_{\text{J},n}$ to find the optimal power $p_{\text{J},n}^*$ and $p_{\text{IT},n}^*$ for each sub-carrier $n$. As a result, the computational complexity is rather high and even prohibitive for large $N$. Here, we propose a suboptimal approach to solve problem \eqref{EquInnLayProb} based on the MM approach \cite{Sun2017} to avoid exhaustive search, which obtains the power allocation solution iteratively. To facilitate the description, we rewrite \eqref{EquInnLayProb} as
\begin{align}
\max_{ \mathbf{p}_{\text{PT}}, \mathbf{p}_{\text{IT}}, \mathbf{p}_{\text{J}} } \; &  \sum_{n=1}^N \bigg[  \ln \left( p_{\text{IT},n} | h_{\text{D},n} |^2 + p_{\text{J},n}  | g_{\text{D},n} |^2 +  \sigma_{\text{D}}^2 \right)  \nonumber \\
& \; \; - \ln \left( p_{\text{J},n}  | g_{\text{D},n} |^2 +  \sigma_{\text{D}}^2 \right) \nonumber \\
& \; \; - \ln \left( p_{\text{IT},n} | h_{\text{E},n} |^2 + p_{\text{J},n}  | g_{\text{E},n} |^2 +  \sigma_{\text{E}}^2 \right)  \nonumber \\
& \; \; + \ln \left( p_{\text{J},n}  | g_{\text{E},n} |^2 +  \sigma_{\text{E}}^2 \right) \bigg]   \label{EquProDiffOfFourLog}  \\
\text{s.t.} \quad  &  \eqref{EquInnCon1} - \eqref {EquInnCon4},  \nonumber
\end{align}
where the property $\log_2 x = \ln x / \ln 2$ is used. The MM approach solves this problem iteratively as follows: in each iteration, this approach first constructs a surrogate function that is a concave lower bound of the objective function of the original problem, then maximizes the surrogate function within the feasible region of the original problem to obtain a feasible solution. The iteration terminates until the series of the obtained feasible solution converges. 

Without loss of generality, we consider the $(k+1)$-th iteration with $k \ge 0$. Suppose that $\mathbf{p}_{\text{PT}}^{(k)} = [ p_{\text{PT},1}^{(k)}, \ldots, p_{\text{PT},N}^{(k)} ]^\dagger$, $\mathbf{p}_{\text{IT}}^{(k)} = [ p_{\text{IT},1}^{(k)},  \ldots, p_{\text{IT},N}^{(k)} ]^\dagger$, $\mathbf{p}_{\text{J}}^{(k)} =[ p_{\text{J},1}^{(k)} , \ldots, p_{\text{J},N}^{(k)} ]^\dagger$ denote the solution obtained in the $k$-th iteration. We show how to find $\mathbf{p}_{\text{PT}}^{(k+1)}$, $\mathbf{p}_{\text{IT}}^{(k+1)}$ and $\mathbf{p}_{\text{J}}^{(k+1)}$ in the $(k+1)$-th iteration. Note that the first-order Taylor expansions of convex functions $- \ln ( p_{\text{J},n}  | g_{\text{D},n} |^2 +  \sigma_{\text{D}}^2 ) $ and $-  \ln ( p_{\text{IT},n} | h_{\text{E},n} |^2 + p_{\text{J},n}  | g_{\text{E},n} |^2 +  \sigma_{\text{E}}^2 )$ around $\mathbf{p}_{\text{IT}}^{(k)}$ and $\mathbf{p}_{\text{J}}^{(k)}$ are their respective global under-estimators \cite{Boyd2004}. Therefore, we have
\begin{equation}
\begin{split}
& - \ln \left( p_{\text{J},n}  | g_{\text{D},n} |^2 +  \sigma_{\text{D}}^2 \right) \\
\geq &- \frac{  | g_{\text{D},n} |^2 ( p_{\text{J},n} - p_{\text{J},n}^{(k)}  ) }{ p_{\text{J},n}^{(k)}  | g_{\text{D},n} |^2 +  \sigma_{\text{D}}^2 } - \ln \left( p_{\text{J},n}^{(k)}  | g_{\text{D},n} |^2 +  \sigma_{\text{D}}^2 \right),
\end{split}
\end{equation}
\begin{equation}
\begin{split}
& -  \ln \left( p_{\text{IT},n} | h_{\text{E},n} |^2 + p_{\text{J},n}  | g_{\text{E},n} |^2 +  \sigma_{\text{E}}^2 \right)  \\
\geq  &  -  \frac{ | h_{\text{E},n} |^2 ( p_{\text{IT},n} - p_{\text{IT},n}^{(k)} ) + | g_{\text{E},n} |^2 ( p_{\text{J},n} - p_{\text{J},n}^{(k)} )  }{ p_{\text{IT},n}^{(k)} | h_{\text{E},n} |^2 + p_{\text{J},n}^{(k)}  | g_{\text{E},n} |^2 +  \sigma_{\text{E}}^2 }   \\
& - \ln \left( p_{\text{IT},n}^{(k)} | h_{\text{E},n} |^2 + p_{\text{J},n}^{(k)}  | g_{\text{E},n} |^2 +  \sigma_{\text{E}}^2 \right).
\end{split}
\end{equation}

We construct a surrogate function of the objective function in \eqref{EquProDiffOfFourLog} by replacing $- \ln ( p_{\text{J},n}  | g_{\text{D},n} |^2 +  \sigma_{\text{D}}^2 ) $ and $-  \ln ( p_{\text{IT},n} | h_{\text{E},n} |^2 + p_{\text{J},n}  | g_{\text{E},n} |^2 +  \sigma_{\text{E}}^2 )$ with their respective first-order Taylor expansions. Then the maximization of the surrogate function within the feasible region of \eqref{EquProDiffOfFourLog} is expressed as
\begin{align}   
\max_{ \mathbf{p}_{\text{PT}}, \mathbf{p}_{\text{IT}}, \mathbf{p}_{\text{J}} } \; &  \sum_{n=1}^N \bigg[  \ln \left( p_{\text{IT},n} | h_{\text{D},n} |^2 + p_{\text{J},n}  | g_{\text{D},n} |^2 +  \sigma_{\text{D}}^2 \right) \nonumber \\
& \; \; + \ln \left( p_{\text{J},n}  | g_{\text{E},n} |^2 +  \sigma_{\text{E}}^2 \right) - \frac{  | g_{\text{D},n} |^2  p_{\text{J},n}   }{ p_{\text{J},n}^{(k)}  | g_{\text{D},n} |^2 +  \sigma_{\text{D}}^2 }  \nonumber \\
& \; \; - \frac{ | h_{\text{E},n} |^2  p_{\text{IT},n} + | g_{\text{E},n} |^2 p_{\text{J},n}  }{ p_{\text{IT},n}^{(k)} | h_{\text{E},n} |^2 + p_{\text{J},n}^{(k)}  | g_{\text{E},n} |^2 +  \sigma_{\text{E}}^2 } \bigg]  \label{EquMMIteProb}   \\
\text{s.t.} \quad   &   \eqref{EquInnCon1} - \eqref {EquInnCon4}, \nonumber
\end{align}
where the constant terms in the objective function are removed. Since the first and second summation terms in the objective function of \eqref{EquMMIteProb} are concave with respect to $p_{\text{IT},n}$ and $p_{\text{J},n}$, and the third and fourth summation terms in the objective function are linear, the objective function of \eqref{EquMMIteProb} is concave. Furthermore, the constraint functions in \eqref{EquInnCon1}--\eqref {EquInnCon4} are all convex, so the feasible region of \eqref{EquMMIteProb} is convex. As a result, problem \eqref{EquMMIteProb} is convex. We solve it by using the Lagrange dual method given in Appendix \ref{AppenDual}, without requiring the one-dimension exhaustive search applied in the optimal approach, and thus the complexity is lower.  

In summary, we have the MM approach as in Algorithm \ref{AlgMMType1}. Since problem \eqref{EquMMIteProb} maximizes the surrogate function which is a lower bound of the objective function of problem \eqref{EquInnLayProb}, and the lower bound and the objective function of (15) are equal only at the given point $(\mathbf{p}_{\text{PT}}^{(k)} , \mathbf{p}_{\text{IT}}^{(k)}, \mathbf{p}_{\text{J}}^{(k)} )$, the objective value of problem \eqref{EquInnLayProb} with the solution obtained by solving problem \eqref{EquMMIteProb} is non-decreasing over iteration. As the optimal value of \eqref{EquInnLayProb} is bounded from above, the MM approach is guaranteed to converge to at least a local optimum \cite{Sun2017}. The complexity of the MM approach is $\mathcal{O} \left[ N_{\text{Ite}} N \log_2 \frac{RG}{\epsilon_{\text{e}}}  \right]$, where $N_{\text{Ite}}$ is the iteration number.


\begin{algorithm}
	\caption{MM Approach to Solve Problem \eqref{EquInnLayProb}}
	\begin{algorithmic}[1]   \label{AlgMMType1}
		\STATE \textbf{Initialization:} Set an initial feasible solution $\mathbf{p}_{\text{PT}}^{(0)}$, $\mathbf{p}_{\text{IT}}^{(0)}$ and $\mathbf{p}_{\text{J}}^{(0)} $ and $k=0$.
		\REPEAT
		\STATE Set $k \gets k+1$;
		\STATE Solve problem \eqref{EquMMIteProb} by using the Lagrange dual method given in Appendix \ref{AppenDual} to find $\mathbf{p}_{\text{PT}}^{(k)}$, $\mathbf{p}_{\text{IT}}^{(k)}$ and $\mathbf{p}_{\text{J}}^{(k)} $.
		\UNTIL {The fractional increase of the objective value is below a small threshold $\epsilon_{\text{M}}$.}
	\end{algorithmic}
\end{algorithm}

\subsection{Heuristic Successive Optimization}  \label{SectionLowType1}
The previous two approaches are implemented iteratively and thus may have relatively high computation complexity. To overcome this issue, we further propose a low-complexity heuristic successive optimization by finding $\mathbf{p}_{\text{PT}}$, $\mathbf{p}_{\text{J}}$, and $\mathbf{p}_{\text{IT}}$ successively without any iteration. To this end, we decouple the variables $\mathbf{p}_{\text{PT}}$ and $\mathbf{p}_{\text{IT}}$ in the constraint \eqref{EquInnCon1}, and have the following problem: 
\begin{subequations}  \label{EquLowComProb}
	\begin{align}
	\max_{\mathbf{p}_{\text{PT}}, \mathbf{p}_{\text{IT}}, \mathbf{p}_{\text{J}} } \; &  \sum_{n=1}^N \bigg[  \ln \left( 1 + \frac{ p_{\text{IT},n} | h_{\text{D},n} |^2 } {  p_{\text{J},n} | g_{\text{D},n} |^2 + \sigma_{\text{D}}^2 }  \right) \nonumber \\
	& \; \; -  \ln \left( 1 + \frac{ p_{\text{IT},n} | h_{\text{E},n} |^2 } {  p_{\text{J},n} | g_{\text{E},n} |^2 + \sigma_{\text{E}}^2 }  \right)  \bigg]   \label{EquLowComObj}   \\
	\text{s.t.}  \quad  &  \sum_{n=1}^N p_{\text{PT},n} \leq P_S , \; 0 \leq p_{\text{PT},n} \leq P_{\text{S,peak}} , \forall n   \label{EquLowComCon1}  \\
	&  \sum_{n=1}^N p_{\text{IT},n} \leq P_S , \; 0 \leq p_{\text{IT},n} \leq P_{\text{S,peak}} , \forall n  \label{EquLowComCon2} \\
	&   \sum_{n=1}^N p_{\text{J},n} \leq \frac{1-\alpha_2}{\alpha_2} P_{\text{EH}} , \; 0 \leq p_{\text{J},n}  \leq P_{\text{J,peak}} , \forall n.  \label{EquLowComCon3}
	\end{align}
\end{subequations}
where $P_{\text{EH}} = \eta \sum_{n=1}^{N} p_{\text{PT},n} | h_{\text{J},n} |^2$ denotes the harvested power at the jammer. Problem \eqref{EquLowComProb} is obtained based on \eqref{EquInnLayProb} by replacing the constraints \eqref{EquInnCon1} and \eqref{EquInnCon2} with \eqref{EquLowComCon1} and \eqref{EquLowComCon2}. Since any variables $\mathbf{p}_{\text{PT}}$, $\mathbf{p}_{\text{IT}}$, and $\mathbf{p}_{\text{J}}$ satisfying \eqref{EquLowComCon1} and \eqref{EquLowComCon2} must satisfy \eqref{EquInnCon1} and \eqref{EquInnCon2}, the feasible region of problem \eqref{EquLowComProb} is a subset of that of \eqref{EquInnLayProb}. Therefore, solving \eqref{EquLowComProb} will result in a feasible solution to \eqref{EquInnLayProb} and achieve its lower bound. 

Next, we solve problem \eqref{EquLowComProb} by finding $\mathbf{p}_{\text{PT}}$, $\mathbf{p}_{\text{J}}$ and $\mathbf{p}_{\text{IT}}$ successively as follows.

\emph{1) Solution of} $\mathbf{p}_{\text{PT}}$.
Note that the optimal value of \eqref{EquLowComProb} can be viewed as a function of $P_{\text{EH}}$, denoted by $S(P_{\text{EH}})$. It is evident that for any given $P_{\text{EH}, 1} \geq P_{\text{EH},2} $, we have $S(P_{\text{EH},1}) \geq S(P_{\text{EH},2})$. This is due to the fact that the larger $P_{\text{EH}, 1}$ can admit a larger feasible region for $\mathbf{p}_{\text{PT}}$, $\mathbf{p}_{\text{IT}}$, and $\mathbf{p}_{\text{J}}$ for problem \eqref{EquLowComProb}, as compared to that admitted by $P_{\text{EH},2}$ (see \eqref{EquLowComCon3}). Therefore, $S(P_{\text{EH}})$ is non-decreasing function of $P_{\text{EH}}$. As a result, although $\mathbf{p}_{\text{PT}}$ is not directly involved in the objective function \eqref{EquLowComObj}, increasing $P_{\text{EH}}$ in \eqref{EquLowComCon3} can increase the objective value in \eqref{EquLowComObj}.

Hence, we propose to find the desirable $\mathbf{p}_{\text{PT}}$ by maximizing $P_{\text{EH}} = \sum_{n=1}^{N} p_{\text{PT},n} | h_{\text{J},n} |^2$. This corresponds to allocating power over the sub-carriers with highest channel gains as follows. Sort the sequence $\{ | h_{\text{J},n} | \}$ in the descent order and form a new sequence $\{| \tilde{h}_{\text{J},n} |\}$, where $| \tilde{h}_{\text{J},1} | \geq | \tilde{h}_{\text{J},2} | \geq \ldots \geq | \tilde{h}_{\text{J},N} |$. Let $k=\lfloor P_S / P_{\text{S,peak}} \rfloor$. Then we set 
\begin{equation}   \label{EquPPT}
p_{\text{PT},n} =
\begin{cases}
P_{\text{S,peak}} &  \text{if } | h_{\text{J},n} | \geq | \tilde{h}_{\text{J},k} |,  \\
P_S - kP_{\text{S,peak}} &  \text{if } | h_{\text{J},n} | = | \tilde{h}_{\text{J},k+1} |,  \\
0 & \text{otherwise.}
\end{cases}
\end{equation}

Consequently, the harvested power at the jammer is
\begin{equation}   \label{EquPEH}
P_{\text{EH}} = \eta \left[ P_{\text{S,peak}} \sum_{n=1}^{k} | \tilde{h}_{\text{J},n} |^2 + (P_S - kP_{\text{S,peak}}) | \tilde{h}_{\text{J},k+1} |^2 \right].
\end{equation}

\emph{2) Solution of} $\mathbf{p}_{\text{J}}$.
After obtaining $\mathbf{p}_{\text{PT}}$ and by substituting \eqref{EquPPT} into problem \eqref{EquLowComProb}, the optimization over $\mathbf{p}_{\text{IT}}$ and $\mathbf{p}_{\text{J}}$ becomes 
\begin{align}  
\max_{ \mathbf{p}_{\text{IT}}, \mathbf{p}_{\text{J}} } \; &  \sum_{n=1}^N \bigg[  \ln \left( 1 + \frac{ p_{\text{IT},n} | h_{\text{D},n} |^2 } {  p_{\text{J},n} | g_{\text{D},n} |^2 + \sigma_{\text{D}}^2 }  \right)  \nonumber \\
& \; \;-  \ln \left( 1 + \frac{ p_{\text{IT},n} | h_{\text{E},n} |^2 } {  p_{\text{J},n} | g_{\text{E},n} |^2 + \sigma_{\text{E}}^2 }  \right)  \bigg]  \label{EquProb3}  \\
\text{s.t.}  \; \;  &  \eqref{EquLowComCon2} ,  \nonumber \\
& \sum_{n=1}^N p_{\text{J},n} \leq  P_{\text{J,total}} ,  0 \leq p_{\text{J},n}  \leq P_{\text{J,peak}} , n \in \mathcal{N} ,  \nonumber
\end{align}
where $P_{\text{J,total}} = \frac{1-\alpha_2}{\alpha_2} P_{\text{EH}}$. As $\mathbf{p}_{\text{IT}}$ has not been obtained at this stage, in order to find $\mathbf{p}_{\text{J}}$, we adopt an equal power allocation over the sub-carriers when jamming is necessary. Considering the sub-carrier $n$, we consider jamming is necessary at that sub-carrier if increasing $p_{\text{J},n}$ at that sub-carrier will increase the objective function in \eqref{EquProb3}. Then, the jamming power is equally allocated over such necessary sub-carriers. We have the following lemma. 

\begin{lemma}  \label{LemmaSJ}
	If $| g_{\text{E},n} |^2/ \sigma_{\text{E}}^2 > | g_{\text{D},n} |^2 / \sigma_{\text{D}}^2$ for sub-carrier $n$, then jamming is necessary at that sub-carrier, i.e., increasing the jamming power at sub-carrier $n$ can increase the secrecy rate in the objective function of \eqref{EquProb3}. 
\end{lemma}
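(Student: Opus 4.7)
My plan is to examine the per-sub-carrier contribution
\[
r_n(p_{J,n}) = \ln\!\left(1 + \frac{p_{IT,n}|h_{D,n}|^2}{p_{J,n}|g_{D,n}|^2 + \sigma_{\text{D}}^2}\right) - \ln\!\left(1 + \frac{p_{IT,n}|h_{E,n}|^2}{p_{J,n}|g_{E,n}|^2 + \sigma_{\text{E}}^2}\right)
\]
to the objective of \eqref{EquProb3} as a scalar function of $p_{J,n}$ alone, and to show that its derivative at $p_{J,n}=0$ is strictly positive under the stated channel condition. First I would split each $\ln(1+\mathrm{SINR}_X)$ into $\ln(\text{numerator})-\ln(\text{denominator})$, giving four logarithmic terms that are trivial to differentiate in $p_{J,n}$. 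Reassembling the four pieces yields
\[
\frac{\partial r_n}{\partial p_{J,n}} = \frac{|g_{E,n}|^2\, p_{IT,n}|h_{E,n}|^2}{(p_{J,n}|g_{E,n}|^2+\sigma_{\text{E}}^2)\bigl(p_{IT,n}|h_{E,n}|^2+p_{J,n}|g_{E,n}|^2+\sigma_{\text{E}}^2\bigr)} - \frac{|g_{D,n}|^2\, p_{IT,n}|h_{D,n}|^2}{(p_{J,n}|g_{D,n}|^2+\sigma_{\text{D}}^2)\bigl(p_{IT,n}|h_{D,n}|^2+p_{J,n}|g_{D,n}|^2+\sigma_{\text{D}}^2\bigr)}.
\]

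Next I would evaluate this derivative at $p_{J,n}=0$ and rewrite it as
\[
\left.\frac{\partial r_n}{\partial p_{J,n}}\right|_{p_{J,n}=0} = \frac{|g_{E,n}|^2}{\sigma_{\text{E}}^2}\,\gamma_{E,n} - \frac{|g_{D,n}|^2}{\sigma_{\text{D}}^2}\,\gamma_{D,n},
\]
where $\gamma_{X,n}\triangleq p_{IT,n}|h_{X,n}|^2/\bigl(p_{IT,n}|h_{X,n}|^2+\sigma_X^2\bigr)\in(0,1)$ for $X\in\{D,E\}$. The hypothesis $|g_{E,n}|^2/\sigma_{\text{E}}^2 > |g_{D,n}|^2/\sigma_{\text{D}}^2$, together with the operational regime of a wireless link in which $p_{IT,n}|h_{X,n}|^2\gg\sigma_X^2$ so that $\gamma_{E,n},\gamma_{D,n}\to 1$, then makes the right-hand side strictly positive. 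Consequently a small increase of $p_{J,n}$ above zero strictly raises $r_n$, and hence raises the objective of \eqref{EquProb3}, proving that jamming is beneficial on sub-carrier $n$.

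The main obstacle is precisely the presence of the two distinct weights $\gamma_{E,n}$ and $\gamma_{D,n}$: without a high-SINR-type qualifier, the hypothesis alone is a sufficient condition only up to these bounded positive factors. A cleaner way to isolate the hypothesis is to cross-multiply the inequality $\partial r_n/\partial p_{J,n}>0$ for general $p_{J,n}$, and observe that after cancellation of the $p_{J,n}$-dependent cross terms the residual determining quantity is $|g_{E,n}|^2\sigma_{\text{D}}^2-|g_{D,n}|^2\sigma_{\text{E}}^2$, which is positive exactly under the stated condition. This confirms that $|g_{E,n}|^2/\sigma_{\text{E}}^2>|g_{D,n}|^2/\sigma_{\text{D}}^2$ is the correct channel-level signature for jamming on sub-carrier $n$ to be useful, justifying the equal-power allocation rule over such sub-carriers in the heuristic.
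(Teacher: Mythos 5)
Your direct-differentiation route is genuinely different from the paper's: the paper writes the per-sub-carrier term as $\ln\frac{1+c_n}{1+d_n}$ with $c_n,d_n$ the two SINRs, shows that $c_n/d_n$ equals $\frac{|h_{\text{D},n}|^2|g_{\text{E},n}|^2}{|h_{\text{E},n}|^2|g_{\text{D},n}|^2}\bigl(1+\frac{\sigma_{\text{E}}^2/|g_{\text{E},n}|^2-\sigma_{\text{D}}^2/|g_{\text{D},n}|^2}{p_{\text{J},n}+\sigma_{\text{D}}^2/|g_{\text{D},n}|^2}\bigr)$, and concludes via the claim that $(1+c)/(1+d)$ increases with $c/d$. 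Your derivative formula is correct, but your argument does not close. At $p_{\text{J},n}=0$ the sign is that of $\frac{|g_{\text{E},n}|^2}{\sigma_{\text{E}}^2}\gamma_{E,n}-\frac{|g_{\text{D},n}|^2}{\sigma_{\text{D}}^2}\gamma_{D,n}$, and the hypothesis controls only the first factors, as you note; however, neither of your proposed closures works. The high-SINR qualifier is an assumption absent from the lemma. And the cross-multiplication claim is algebraically false: clearing denominators in $\partial r_n/\partial p_{\text{J},n}>0$ leaves $|h_{\text{E},n}|^2$ and $|h_{\text{D},n}|^2$ entering the two sides asymmetrically,
\[
|g_{\text{E},n}|^2|h_{\text{E},n}|^2\,\sigma_{\text{D}}^2\bigl(p_{\text{IT},n}|h_{\text{D},n}|^2+\sigma_{\text{D}}^2\bigr) \;>\; |g_{\text{D},n}|^2|h_{\text{D},n}|^2\,\sigma_{\text{E}}^2\bigl(p_{\text{IT},n}|h_{\text{E},n}|^2+\sigma_{\text{E}}^2\bigr)
\]
already at $p_{\text{J},n}=0$, so the residual condition is not $|g_{\text{E},n}|^2\sigma_{\text{D}}^2-|g_{\text{D},n}|^2\sigma_{\text{E}}^2>0$; those cross terms do not cancel.

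In fact your correct formula shows the conclusion can fail under the stated hypothesis: take $\sigma_{\text{D}}^2=\sigma_{\text{E}}^2$, $|g_{\text{E},n}|^2$ only slightly larger than $|g_{\text{D},n}|^2$, and $p_{\text{IT},n}|h_{\text{E},n}|^2\ll\sigma_{\text{E}}^2\ll p_{\text{IT},n}|h_{\text{D},n}|^2$; then $\gamma_{E,n}\approx 0$, $\gamma_{D,n}\approx 1$, the secrecy rate is positive, yet the derivative at $p_{\text{J},n}=0$ is negative. This is not a defect of your route relative to the paper's: the paper's key ``fact'' that $(1+c)/(1+d)$ increases with $c/d$ is justified there only by fixing $d$ and increasing $c$, whereas here both SINRs decrease as $p_{\text{J},n}$ grows, so the same gap is present in the paper, merely hidden. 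An honest version of the lemma along your lines needs the weighted condition $\frac{|g_{\text{E},n}|^2}{\sigma_{\text{E}}^2}\gamma_{E,n}>\frac{|g_{\text{D},n}|^2}{\sigma_{\text{D}}^2}\gamma_{D,n}$ (or an explicit high-SINR regime under which it reduces to the stated channel-gain comparison), together with a check that the derivative keeps its sign for $p_{\text{J},n}>0$ rather than only at zero.
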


\begin{proof}
See Appendix \ref{Appen1}. 
\end{proof}

\emph{Remarks:} Note that $| g_{\text{E},n} |^2 / \sigma_{\text{E}}^2$ and $| g_{\text{D},n} |^2 /  \sigma_{\text{D}}^2$ are effective channel gains from the jammer to the eavesdropper and the destination, respectively. Lemma \ref{LemmaSJ} shows that in order to improve the secrecy rate of the system, jamming power should be allocated to the sub-carriers where the effective jamming channel gains to the eavesdropper are stronger than that to the destination.

Denote the set of sub-carriers over which jamming is necessary as
\begin{equation}   \label{EquSJ}
\mathcal{S}_{\text{J}} \triangleq \left \{ n \big| \frac{ | g_{\text{E},n} |^2 }{ \sigma_{\text{E}}^2 }   > \frac{ | g_{\text{D},n} |^2 } { \sigma_{\text{D}}^2 }  \right \}.
\end{equation}
Based on Lemma \ref{LemmaSJ}, we allocate the jamming power equally over the sub-carriers in $\mathcal{S}_{\text{J}}$, i.e.,
\begin{equation}   \label{EquPJLow}
p_{\text{J},n} =
\begin{cases}
\frac{P_{\text{J,total}}} {|\mathcal{S}_{\text{J}}|}, & n \in \mathcal{S}_{\text{J}},  \\
0, & \text{otherwise}.
\end{cases}
\end{equation}

\emph{3) Solution of} $\mathbf{p}_{\text{IT}}$.
For notational convenience, we define $a_n$ and $b_n$ as in \eqref{Equa} and \eqref{Equb}. By substituting \eqref{EquPJLow}, problem \eqref{EquLowComProb} becomes
\begin{align}  
\max_{ \mathbf{p}_{\text{IT}} }  \; & \sum_{n =1}^{N} \left[ \ln(1 + a_n p_{\text{IT},n} ) - \ln(1 + b_n p_{\text{IT},n} ) \right]  \label{EqProbLowComPIT} \\
\text{s.t.} \; \; & \eqref{EquLowComCon2}.  \nonumber
\end{align}
When $ a_n \leq b_n$, the objective function of \eqref{EqProbLowComPIT} is non-increasing function of $p_{\text{IT},n}$, and the optimal solution should be $p_{\text{IT},n}=0$. When $a_n > b_n$, the objective function of \eqref{EqProbLowComPIT} is concave with respect to $p_{\text{IT},n}$, and the optimal solution can be obtained by taking derivative of the objective function of \eqref{EqProbLowComPIT} with respect to $p_{\text{IT},n}$ and setting it to zero. As a result, the optimal power allocation solution to problem \eqref{EqProbLowComPIT}, is given by
\begin{equation}   \label{EquPIT}
p_{\text{IT},n} =
\begin{cases}
\min \left(  \left[ \tilde{p}_n \right]^{+}  , \; P_{\text{S,peak}} \right), & n \in \mathcal{S}_{\text{IT}} ,  \\
0, & \text{otherwise},
\end{cases}
\end{equation}
where
\begin{equation}   \label{EquPITLowSol2}
\tilde{p}_n = - \frac{1}{2 b_n} - \frac{1}{2 a_n} + \sqrt{ \left( \frac{1}{2 b_n} - \frac{1}{2 a_n} \right)^2  + \frac{1}{\vartheta} \left( \frac{1}{b_n} - \frac{1}{a_n} \right) },
\end{equation}
\begin{equation}  \label{EquSIT}
\mathcal{S}_{\text{IT}} \triangleq \{ n | a_n > b_n \},
\end{equation}
and $\vartheta \in [0, \max_{n \in \mathcal{S}_{\text{IT}}} (a_n - b_n)]$ guarantees the power constraint \eqref{EquLowComCon2} to be satisfied with equality, and it can be determined by bisection search. The upper bound of $\vartheta$ is found in Appendix \ref{Appen2}. The heuristic successive optimization approach is summarized in Algorithm \ref{AlgHeuType1}. The complexity of it is $\mathcal{O}(N)$.

\begin{algorithm}
	\caption{Heuristic Successive Optimization for Problem \eqref{EquInnLayProb}}
	\begin{algorithmic}[1]  \label{AlgHeuType1}
		\STATE Obtain $\mathbf{p}_{\text{PT}}$ using \eqref{EquPPT}, calculate $P_{\text{EH}}$ according to \eqref{EquPEH}, $P_{\text{J,total}} = [(1-\alpha_2)/\alpha_2] P_{\text{EH}}$.
		\STATE Obtain $\mathbf{p}_{\text{J}}$ using \eqref{EquPJLow}.
		\STATE Obtain $\mathbf{p}_{\text{IT}}$ using \eqref{EquPIT} where $\vartheta$ is found by using a bisection search over $[0, \max_{n \in \mathcal{S}_{\text{IT}}} (a_n - b_n)]$.
	\end{algorithmic}
\end{algorithm}

\section{Solution to Problem (P2) with Type-II Destination Receiver}
Now, we consider problem (P2) with Type-II destination receiver. Similarly as for problem (P1) with Type-I receiver, we reformulate this problem in the following form with outer and inner layers. 
\begin{equation}  \label{EquOutLayProbType2}
\max_{\alpha_2} \;   \alpha_2 \mathcal{R}^{(\text{II})}(\alpha_2), \; \;   \text{s.t.} \;   0 \leq \alpha_2 \leq 1,
\end{equation}
where
\begin{align}  
	\mathcal{R}^{(\text{II})}(\alpha_2) = & \max_{ \mathbf{p}_{\text{PT}}, \mathbf{p}_{\text{IT}}, \mathbf{p}_{\text{J}} }  \sum_{n=1}^N \bigg[  \log_2 \left( 1 + \frac{ p_{\text{IT},n} | h_{\text{D},n} |^2 } { \sigma_{\text{D}}^2 }  \right) \nonumber \\
	& \; \; \quad -  \log_2 \left( 1 + \frac{ p_{\text{IT},n} | h_{\text{E},n} |^2 } {  p_{\text{J},n} | g_{\text{E},n} |^2 + \sigma_{\text{E}}^2 }  \right)  \bigg]    \label{EquInnLayProbType2}   \\
	& \; \quad \text{s.t.} \; \;  \eqref{EquInnCon1} - \eqref {EquInnCon4} . \nonumber
\end{align}
As problem \eqref{EquOutLayProbType2} can be solved by a one-dimensional search over the interval $[0,1]$, we only need to focus on solving problem \eqref{EquInnLayProbType2} under given time allocation $\alpha_2$. In the following, we proposed the optimal, suboptimal and heuristic approaches, respectively, similarly as in the previous section for problem \eqref{EquInnLayProb}.

\subsection{Optimal Solution to Problem \eqref{EquInnLayProbType2} Via The Lagrange Dual Method}
Similar to Section \ref{SecLagrangeTypeI}, we apply the Lagrange dual approach to obtain the optimal solution to problem \eqref{EquInnLayProbType2}. The partial Lagrangian of \eqref{EquInnLayProbType2} is
\begin{align}
& L^{\text{(II)}} (\mathbf{p}_{\text{PT}}, \mathbf{p}_{\text{IT}}, \mathbf{p}_{\text{J}}, \lambda, \mu )   \nonumber \\
= &  \sum_{n=1}^N \bigg[  \log_2 \left( 1 + \frac{ p_{\text{IT},n} | h_{\text{D},n} |^2 } { \sigma_{\text{D}}^2 }  \right) \nonumber \\
&  -  \log_2 \left( 1 + \frac{ p_{\text{IT},n} | h_{\text{E},n} |^2 } {  p_{\text{J},n} | g_{\text{E},n} |^2 + \sigma_{\text{E}}^2 }  \right)  \bigg]     \nonumber  \\
& +\lambda  \bigg[ P_S -   (1- \alpha_2)  \sum_{n=1}^{N} p_{\text{PT},n} - \alpha_2 \sum_{n=1}^N p_{\text{IT},n}   \bigg]   \nonumber  \\
& + \mu \bigg[ (1-\alpha_2) \eta \sum_{n=1}^{N} p_{\text{PT},n} | h_{\text{J},n} |^2 - \alpha_2 \sum_{n=1}^N p_{\text{J},n}  \bigg],
\end{align}
where $\lambda \geq 0$ and $\mu \geq 0$ are the dual variables associated with the constraints \eqref{EquInnCon1} and \eqref{EquInnCon3}. The dual function is defined as
\begin{align}
g(\lambda,\mu) =  \max_{ \mathbf{p}_{\text{PT}}, \mathbf{p}_{\text{IT}}, \mathbf{p}_{\text{J}} } & L^{(\text{II})}(\mathbf{p}_{\text{PT}}, \mathbf{p}_{\text{IT}}, \mathbf{p}_{\text{J}}, \lambda, \mu)  \nonumber  \\
\text{s.t.} \quad & \; 0 \leq p_{\text{PT},n}  \leq P_{\text{S,peak}} , \; \forall n, \nonumber \\
& \; 0 \leq p_{\text{IT},n}  \leq P_{\text{S,peak}} , \; \forall n, \nonumber \\
& \;  0 \leq p_{\text{J},n}  \leq P_{\text{J,peak}} , \; \forall n. \label{EquDualFunType2}
\end{align}
Then, the dual problem of \eqref{EquInnLayProbType2} is
\begin{equation}  \label{EquDualProbType2}
\min_{\lambda,\mu}  \; g(\lambda,\mu) \; \text{s.t.}  \; \lambda \geq 0, \; \mu \geq 0.  
\end{equation}
First, we solve problem \eqref{EquDualFunType2} under any given $\lambda \geq 0$ and $\mu \geq 0$, which can be decomposed into $2N$ subproblems as follows, each for one sub-carrier $n$.
\begin{align}
\max_{p_{\text{PT},n}} & \; -\lambda (1-\alpha_2) p_{\text{PT},n} + \mu (1-\alpha_2 ) \eta | h_{\text{J},n} |^2 p_{\text{PT},n}    \nonumber   \\
\text{s.t.} \; & \; 0 \leq p_{\text{PT},n}  \leq P_{\text{S,peak}} , \label{EquProbPPTType2}
\end{align}
\begin{align}
& \max_{p_{\text{IT},n},p_{\text{J},n}} \;  \log_2 \left( 1 + \frac{ p_{\text{IT},n} | h_{\text{D},n} |^2 } {   \sigma_{\text{D}}^2 }  \right)   \nonumber \\
& \quad \; \; \; -  \log_2 \left( 1 + \frac{ p_{\text{IT},n} | h_{\text{E},n} |^2 } {  p_{\text{J},n} | g_{\text{E},n} |^2 + \sigma_{\text{E}}^2 }  \right) - \lambda \alpha_2 p_{\text{IT},n}  - \mu \alpha_2  p_{\text{J},n}    \nonumber  \\
& \quad \;  \text{s.t.} \; \;\; \;   0 \leq p_{\text{IT},n}  \leq P_{\text{S,peak}} , \nonumber \\
& \quad \quad \; \; \; \; \; \; 0 \leq p_{\text{J},n}  \leq P_{\text{J,peak}}. \label{EquProbPITPJType2}
\end{align}
Subproblem \eqref{EquProbPPTType2} is the same with problem \eqref{EquProbPPT}, so the solution can be obtained by \eqref{EquOptPPT}. Subproblem \eqref{EquProbPITPJType2} can be solved by the same method of solving problem \eqref{EquProbPITPJ}. The optimal $p_{\text{IT},n}$ with given $p_{\text{J},n}$ can be obtained by \eqref{EquOptPIT}, provided that $a_n$ is revised to be
\begin{equation}    
a_n = \frac{  | h_{\text{D},n} |^2 } {    \sigma_{\text{D}}^2  }.
\end{equation}
The optimal $p_{\text{J},n}$ is obtained by a one-dimension search within $[0,P_{\text{J,peak}}]$.

To solve \eqref{EquDualProbType2}, the pair $(\lambda,\mu)$ is updated by the ellipsoid method \cite{Boyd2004}, and the subgradients for $\lambda$ and $\mu$ are the same with \eqref{EquSubgradLambda} and \eqref{EquSubgradMu}. With the optimal dual variables $\lambda^*$ and $\mu^*$, the corresponding optimal $p_{\text{IT},n}^*$'s and $p_{\text{J},n}^*$'s, which are obtained by by solving problem \eqref{EquProbPITPJType2}, become optimal to problem \eqref{EquInnLayProbType2}. The optimal $p_{\text{PT},n}^*$'s to problem \eqref{EquInnLayProbType2} can be obtained by using \eqref{EquOptPPTFinal}.

The overall algorithm is similar as Algorithm \ref{AlgOptType1}, and is thus omitted here for brevity. Its complexity is $\mathcal{O} \left[ N \left( \frac{P_{\text{J,peak}}}{\epsilon_{\text{J}}} + 1\right) \log_2 \frac{RG}{\epsilon_{\text{e}}} \right]$ \cite{Boyd2014}.

\subsection{Minorization Maximization (MM)}
For the same reason expressed in Section \ref{SectionMMType1}, we propose a suboptimal approach to solve problem \eqref{EquInnLayProbType2} based on the MM approach. We rewrite \eqref{EquInnLayProbType2} as
\begin{align}  
\max_{ \mathbf{p}_{\text{PT}}, \mathbf{p}_{\text{IT}}, \mathbf{p}_{\text{J}} }  &  \sum_{n=1}^N \bigg[  \ln \left( p_{\text{IT},n} | h_{\text{D},n} |^2 +  \sigma_{\text{D}}^2 \right) + \ln \left( p_{\text{J},n}  | g_{\text{E},n} |^2 +  \sigma_{\text{E}}^2 \right)  \nonumber \\
& \; \;- \ln \left( p_{\text{IT},n} | h_{\text{E},n} |^2 + p_{\text{J},n}  | g_{\text{E},n} |^2 +  \sigma_{\text{E}}^2 \right) \bigg] \label{EquProDiffOfLogType2} \\
\text{s.t.} \; \; \;  &  \eqref{EquInnCon1} - \eqref {EquInnCon4}. \nonumber
\end{align}
This subsection adopts the MM approach to solve problem \eqref{EquInnLayProbType2} by following a similar procedure as in Section \ref{SectionMMType1}. Denote $\mathbf{p}_{\text{PT}}^{(k)}$, $\mathbf{p}_{\text{IT}}^{(k)}$ and $\mathbf{p}_{\text{J}}^{(k)}$ as the solution in the $k$-th iteration. Next, in the $(k+1)$-th iteration, we construct the surrogate function of the objective function in \eqref{EquInnLayProbType2} by replacing $-\ln ( p_{\text{IT},n} | h_{\text{E},n} |^2 + p_{\text{J},n}  | g_{\text{E},n} |^2 +  \sigma_{\text{E}}^2 )$ as its first-order Taylor expansion around $\mathbf{p}_{\text{IT}}^{(k)}$ and $\mathbf{p}_{\text{J}}^{(k)}$, and then solve the following surrogate function maximization problem within the feasible region of \eqref{EquInnLayProbType2}. 
\begin{align}  
\max_{ \mathbf{p}_{\text{PT}}, \mathbf{p}_{\text{IT}}, \mathbf{p}_{\text{J}} } \; &  \sum_{n=1}^N \bigg[  \ln \left( p_{\text{IT},n} | h_{\text{D},n} |^2 +  \sigma_{\text{D}}^2 \right)   + \ln \left( p_{\text{J},n}  | g_{\text{E},n} |^2 +  \sigma_{\text{E}}^2 \right) \nonumber \\
& \quad \quad  - \frac{ p_{\text{IT},n} | h_{\text{E},n} |^2 + p_{\text{J},n} | g_{\text{E},n} |^2  }{ p_{\text{IT},n}^{(k)} | h_{\text{E},n} |^2 + p_{\text{J},n}^{(k)}  | g_{\text{E},n} |^2 +  \sigma_{\text{E}}^2 } \bigg]  \label{EquMMIteProbType2} \\
\text{s.t.} \quad  &    \eqref{EquInnCon1} - \eqref {EquInnCon4}.  \nonumber
\end{align}
Problem \eqref{EquMMIteProbType2} is convex and thus can be solved by the Lagrange dual method given in Appendix \ref{AppenDualType2}. We iterate this procedure until the obtained solution sequence converges. As a result, the MM based solution is found. The algorithm description is similar to Algorithm \ref{AlgMMType1}, and is omitted here for brevity. The complexity is $\mathcal{O} \left[ N_{\text{Ite}} N \log_2 \frac{RG}{\epsilon_{\text{e}}}  \right]$.

\subsection{Heuristic Successive Optimization}
In addition, we propose a non-iterative heuristic successive optimization with much lower implementation complexity. Similar as in Section \ref{SectionLowType1}, we obtain an efficient solution to problem \eqref{EquInnLayProbType2} by considering the following problem, where the constraints \eqref{EquLowComCon1} and \eqref{EquLowComCon2} replace the constraints \eqref{EquInnCon1} and \eqref{EquInnCon2} in \eqref{EquInnLayProbType2}.
\begin{align}  
	\max_{ \mathbf{p}_{\text{PT}}, \mathbf{p}_{\text{IT}}, \mathbf{p}_{\text{J}} } &   \sum_{n=1}^N \bigg[  \log_2 \left( 1 + \frac{ p_{\text{IT},n} | h_{\text{D},n} |^2 } { \sigma_{\text{D}}^2 }  \right)   \nonumber \\
	& \; \; -  \log_2 \left( 1 + \frac{ p_{\text{IT},n} | h_{\text{E},n} |^2 } {  p_{\text{J},n} | g_{\text{E},n} |^2 + \sigma_{\text{E}}^2 }  \right)  \bigg]   \label{EquLowComProbType2}  \\
	\text{s.t.} \; \; \; &   \eqref{EquLowComCon1}, \; \eqref{EquLowComCon2}, \; \eqref{EquInnCon3}, \;\eqref{EquInnCon4} .   \nonumber
\end{align}
In the following, we solve problem \eqref{EquLowComProbType2} by obtaining $\mathbf{p}_{\text{PT}}$, $\mathbf{p}_{\text{J}}$ and $\mathbf{p}_{\text{IT}}$ successively.

\emph{1) Solution of} $\mathbf{p}_{\text{PT}}$.
It is easy to show that the optimization over $\mathbf{p}_{\text{PT}}$ with Type-II destination receiver is indeed same as that with Type-I receiver in Section \ref{SectionLowType1}. Therefore, $\mathbf{p}_{\text{PT}}$ is obtained as in \eqref{EquPPT}.

\emph{2) Solution of} $\mathbf{p}_{\text{J}}$.
With $\mathbf{p}_{\text{PT}}$, the remaining optimization over $\mathbf{p}_{\text{IT}}$ and $\mathbf{p}_{\text{J}}$ is expressed as
\begin{align} 
 \max_{ \mathbf{p}_{\text{IT}}, \mathbf{p}_{\text{J}} } \; &   \sum_{n=1}^N \bigg[  \log_2 \left( 1 + \frac{ p_{\text{IT},n} | h_{\text{D},n} |^2 } { \sigma_{\text{D}}^2 }  \right)  \nonumber \\
 & -  \log_2 \left( 1 + \frac{ p_{\text{IT},n} | h_{\text{E},n} |^2 } {  p_{\text{J},n} | g_{\text{E},n} |^2 + \sigma_{\text{E}}^2 }  \right)  \bigg]   \label{EquLowComProbType2v2}  \\
 \text{s.t.} \;  \; &   \eqref{EquLowComCon2} ,  \nonumber \\
 & \sum_{n=1}^N p_{\text{J},n} \leq  P_{\text{J,total}} ,   0 \leq p_{\text{J},n}  \leq P_{\text{J,peak}} , n \in \mathcal{N}.   \nonumber
\end{align}
Similar to the case of Type-I receiver in Section \ref{SectionLowType1}, we obtain $\mathbf{p}_{\text{J}}$ by applying an equal power allocation over sub-carriers where jamming power is necessary to improve the secrecy rate. From \eqref{EquLowComProbType2v2}, it is observed that over all sub-carriers, setting $p_{\text{J},n}$ to be positive can increase the objective function of \eqref{EquLowComProbType2v2}. As a result, all sub-carriers should be jammed. Therefore, we have the equal jamming power allocation over all sub-carriers as
\begin{equation}  \label{EquPJType2Low}
p_{\text{J},n} = \frac{P_{\text{J,total}}}{N}, \; n \in \mathcal{N}.
\end{equation}

\emph{3) Solution of} $\mathbf{p}_{\text{IT}}$.
With $\mathbf{p}_{\text{PT}}$ and $\mathbf{p}_{\text{J}}$ obtained, the optimization over $\mathbf{p}_{\text{IT}}$ is expressed as the same form in \eqref{EqProbLowComPIT} , provided that $a_n$ and $b_n$ are revised to be
\begin{equation}
a_n = \frac{  | h_{\text{D},n} |^2 } {   \sigma_{\text{D}}^2  },
\end{equation}
\begin{equation}
b_n = \frac{ | h_{\text{E},n} |^2 } {  p_{\text{J},n} | g_{\text{E},n} |^2 + \sigma_{\text{E}}^2  }.
\end{equation}
As a result, \eqref{EquPIT} are directly applicable to obtain $\mathbf{p}_{\text{IT}}$.

By combining \eqref{EquPPT} for $\mathbf{p}_{\text{PT}}$, \eqref{EquPJType2Low} for $\mathbf{p}_{\text{J}}$, and \eqref{EquPIT} for $\mathbf{p}_{\text{IT}}$, a heuristic solution to problem \eqref{EquInnLayProbType2} is finally obtained. The algorithm description is similar to Algorithm \ref{AlgHeuType1}, and is omitted here for brevity. The complexity is $\mathcal{O}(N)$.

\section{Simulation results}
In this section, we conduct computer simulations to verify the performances of our proposed approaches, as compared to the following benchmark schemes under fixed time allocation $\alpha_1$ and $\alpha_2$, or without any jamming:
\begin{itemize}
	\item MM-based approach with fixed time allocation (abbreviated as ``MM w/ fixed TA'' ): This scheme fixes the time allocation $\alpha_2$ as a constant, under which the source and jammer cooperatively allocate their power allocations adaptively over sub-carriers to maximize the secrecy rate. Particularly, this corresponds to solving problems \eqref{EquInnLayProb} and \eqref{EquInnLayProbType2} by using the MM approach for Type-I and Type-II destination receivers, respectively. 
	\item Heuristic successive optimization with fixed time allocation (abbreviated as ``heuristic w/ fixed TA''): This scheme also fixes the time allocation $\alpha_2$ as a constant, and optimizes the transmit power allocation over sub-carriers as in Sections III-C and IV-C with Type-I and Type-II destination receivers, respectively.
	\item Conventional design without cooperative jamming (abbreviated as ``conventional w/o CJ''): This scheme does not employ any cooperative jamming by allocating all the time and power for WIT. Under both receiver types, the source optimizes its power allocation based on \eqref{EquPIT}, where $\mathcal{S}_{\text{IT}}=\mathcal{N}$ are set.
\end{itemize}

Note that the implementation of the MM approach for both Type-I and Type-II destination receivers depends on the initial power allocation solution $\mathbf{p}_{\text{PT}}^{(0)}$, $\mathbf{p}_{\text{IT}}^{(0)}$ and $\mathbf{p}_{\text{J}}^{(0)} $. In the simulations, they are chosen based on the heuristic designs as follows. First, the initial $\mathbf{p}_{\text{PT}}^{(0)}$ is obtained in \eqref{EquPPT}. Next, the initial $\mathbf{p}_{\text{J}}^{(0)}$ is obtained by \eqref{EquPJLow} for Type-I receiver and by \eqref{EquPJType2Low} for Type-II receiver. Finally, the initial $\mathbf{p}_{\text{IT}}^{(0)}$ is obtained by first finding the sub-carrier set $\mathcal{S}_{\text{IT}}$ according to \eqref{EquSIT} and then allocating the transmit power equally over all the sub-carriers in $\mathcal{S}_{\text{IT}}$.

In the simulations, the carrier frequency is 750MHz, and the bandwidth is 10MHz. The number of sub-carriers is set as $N=32$. The channel response vectors $\mathbf{h}_{\text{J}}$, $\mathbf{h}_{\text{D}}$, $\mathbf{h}_{\text{E}}$, $\mathbf{g}_{\text{D}}$, and $\mathbf{g}_{\text{E}}$ are assumed to be independent and identically distributed CSCG random variables with zero mean and propagation-distance-dependent variances. The variances of the elements of $\mathbf{h}_{\text{J}}$, $\mathbf{h}_{\text{D}}$, $\mathbf{h}_{\text{E}}$, $\mathbf{g}_{\text{D}}$, and $\mathbf{g}_{\text{E}}$ are $\zeta_0 (d_{\text{SJ}} / d_0)^{-\kappa}$, $\zeta_0 (d_{\text{SD}} /d_0)^{-\kappa}$, $\zeta_0 (d_{\text{SE}} /d_0)^{-\kappa}$, $\zeta_0 (d_{\text{JD}}/d_0)^{-\kappa}$, and $\zeta_0 (d_{\text{JE}}/d_0)^{-\kappa}$, respectively, where $d_{\text{SJ}}$, $d_{\text{SD}}$, $d_{\text{SE}}$, $d_{\text{JD}}$, and $d_{\text{JE}}$ denote the distances from the source to the jammer, from the source to the destination, from the source to the eavesdropper, from the jammer to the destination, from the jammer to the eavesdropper, respectively. Here, $\zeta_0 = -30$dB corresponds to the path loss at a reference distance of $d_0 = 1$m, and $\kappa = 3$ is the path-loss exponent. We assume that the destination and eavesdropper are located close to each other, and set $d_{\text{SD}}=d_{\text{SE}}=5$m. We also assume that the jammer is located on the straight line between the source and the destination (or the eavesdropper), so $d_{\text{JD}}=d_{\text{SD}}-d_{\text{SJ}}$ and $d_{\text{JE}}=d_{\text{SE}}-d_{\text{SJ}}$. The variances of additive Gaussian noises over each sub-carrier are $\sigma_{\text{D}}^2=\sigma_{\text{E}}^2 = \sigma^2 / N$, where $\sigma^2 = -100$dBm. The energy harvesting efficiency is set as $\eta=0.5$. When applying the ellipsoid method, the initial dual variables are set to $\lambda=100$ and $\mu=100$, and the initial ellipsoid is set as $(\lambda - 100)^2 + (\mu - 100)^2 \leq 20100$. The one-dimension search interval for finding $p_{\text{J},n}$ in Algorithm \ref{AlgOptType1} is set to $\epsilon_{\text{J}} = P_{\text{J,peak}}/1000$, the convergence accuracy of ellipsoid method is set to $\epsilon_{\text{e}}=10^{-4}$, and convergence threshold in Algorithm \ref{AlgMMType1} is set to $\epsilon_{\text{M}} = 10^{-4}$. Unless specified otherwise, the following simulation results are averaged over 500 random independent channel realizations.

\begin{figure}[!t]
	\centering
	\includegraphics[width=\columnwidth]{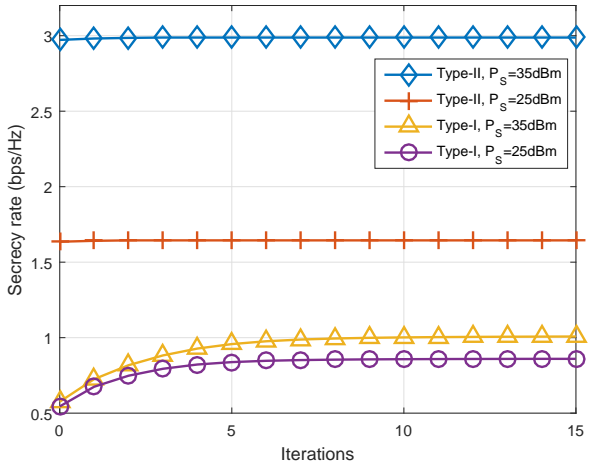}
	\caption{Secrecy rate of the MM approach vs. iteration number when $\alpha_2$=0.8 and $d_{\text{SJ}}=0.5$m.}   \label{FigRsIte}
\end{figure}

First, Fig. \ref{FigRsIte} shows the convergence behavior of the proposed MM approach for a given channel realization. The time portion is set as $\alpha_2=0.8$, and the distance from the source to the jammer is set as $d_{\text{SJ}}=0.5$m. The transmit power $P_S$ are 25dBm and 35dBm, respectively. In Fig. \ref{FigRsIte}, it is observed that the secrecy rates monotonically increase with the iteration number. With Type-I destination receiver, the secrecy rates converge within less than 10 iterations. With Type-II destination receiver, the secrecy rates converge within less than 5 iterations.

\begin{figure}[!t]
	\centering
	\includegraphics[width=\columnwidth]{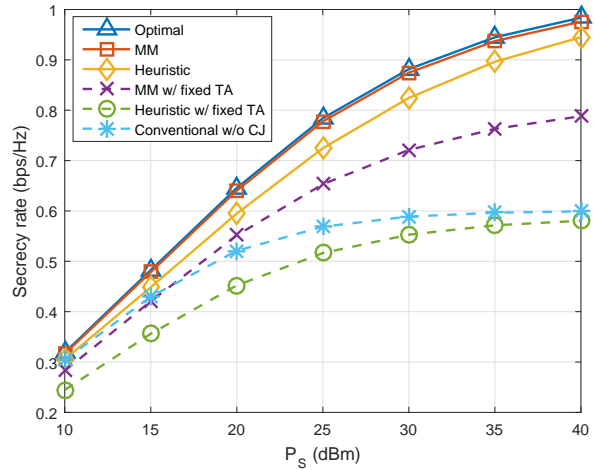}
	\caption{Secrecy rate vs. $P_S$ when $d_{\text{SJ}}=0.5$m (Type-I receiver at the destination)}   \label{FigRsPs}
\end{figure}

Next, Figs. \ref{FigRsPs} and \ref{FigRsPsType2} show the average secrecy rates versus the transmit power $P_S$ at the source, where the distance from the source to the jammer is set as $d_{\text{SJ}}=0.5$m. In Fig. \ref{FigRsPs} with Type-I destination receiver, it is observed that the average secrecy rates of all schemes increase as $P_S$ becomes large. The optimal Lagrange dual approach achieves the highest secrecy rate. The MM approach has a very close secrecy rate to the optimal approach. The heuristic successive optimization is observed to have a slightly lower secrecy rate than the optimal and MM approaches, but outperforms the other benchmark schemes significantly. This thus indicates the superiority of joint time and power allocation for improving secrecy rate, and validates the necessity of allocating time and power to wirelessly power the cooperative jamming in order to improve the secrecy rate of the OFDM communication.

\begin{figure}[!t]
	\centering
	\includegraphics[width=\columnwidth]{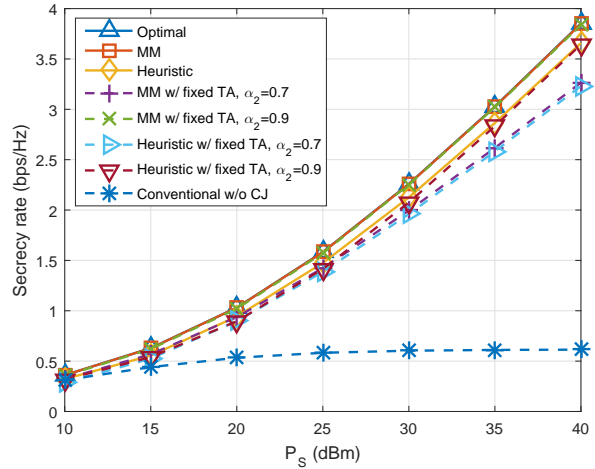}
	\caption{Secrecy rate vs. $P_S$ when $d_{\text{SJ}}=0.5$m (Type-II receiver at the destination)}   \label{FigRsPsType2}
\end{figure}

In Fig. \ref{FigRsPsType2} with Type-II destination receiver, it is observed that all schemes with jamming outperform the conventional one without cooperative jamming, which shows that wireless powered jamming is very effective in improving physical layer security. Similarly as in Fig. \ref{FigRsPs}, it is observed that the optimal approach achieves the highest secrecy rate performance. In addition, the performance gap between the MM approach and the optimal approach is very small. The heuristic successive optimization has slightly lower secrecy rate than the optimal and MM approaches but outperforms the others. Furthermore, it is observed that as compared with the Type-I destination receiver case in Fig. \ref{FigRsPs}, the secrecy rate under the Type-II destination receiver improves dramatically, thanks to the additional jamming signal cancellation at the Type-II receiver. 

\begin{figure}[!t]
	\centering
	\includegraphics[width=\columnwidth]{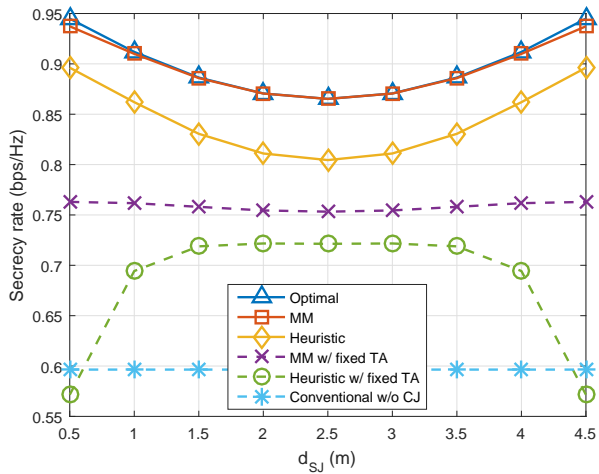}
	\caption{Secrecy rate vs. $d_{\text{SJ}}$ when $P_S=35$dBm (Type-I receiver at the destination)}
	\label{FigPsDsj}
\end{figure}

\begin{figure}[!t]
	\centering
	\includegraphics[width=\columnwidth]{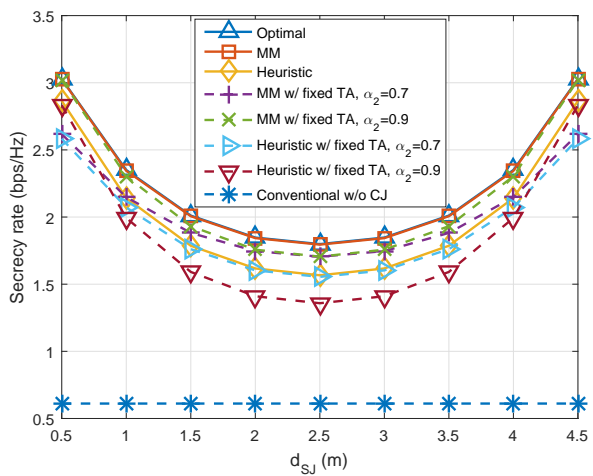}
	\caption{Secrecy rate vs. $d_{\text{SJ}}$ when $P_S=35$dBm (Type-II receiver at the destination)}
	\label{FigPsDsjType2}
\end{figure}

Then, we show the impact of node positions on the average secrecy rates of different schemes. We fix the positions of the source, destination and eavesdropper, and vary the position of the jammer. In the simulations, $d_{\text{SJ}}$ varies from 0.5m to 4.5m, and the power is set to be $P_S = 35$dBm. Fig. \ref{FigPsDsj} shows the secrecy rate versus $d_{\text{SJ}}$ with Type-I destination receiver. It is observed that when the jammer is moved from the source to the destination (with $d_{\text{SJ}}$ increasing from 0.5m to 4.5m), the average secrecy rates of the optimal, MM and heuristic successive optimization approaches first decrease and then increase, and the minimum secrecy rates of them are attained when the jammer is located at the middle between them, i.e., $d_{\text{SJ}}=2.5$m. It is also observed that when the jammer's location changes, the average secrecy rate achieved by the MM w/ fixed TA scheme only varies slightly, and that by the heuristic w/ fixed TA scheme first increases and then decreases.  The MM w/ fixed TA scheme and heuristic w/ fixed TA scheme are observed to outperform the conventional w/o CJ scheme significantly in most cases.

Fig. \ref{FigPsDsjType2} shows the secrecy rate versus $d_{\text{SJ}}$ with Type-II destination receiver. It is observed that when $d_{\text{SJ}}$ increases from 0.5m to 4.5m, the secrecy rates of all schemes with cooperative jamming first decrease and then increase, and the minimum secrecy rates of them are attained at $d_{\text{SJ}}=2.5$m. It can also observed that the conventional w/o CJ scheme always has the lowest secrecy rate.

Finally, Figs. \ref{FigAlpha2Dsj} and \ref{FigAlpha2DsjType2} show the optimal time portion $\alpha_2$ under the proposed three approaches versus $d_{\text{SJ}}$ with Type-I and Type-II destination receivers, respectively, where $P_S$ is fixed to 35dBm. It is observed that for all the three schemes, as $d_{\text{SJ}}$ increases, the optimal $\alpha_2$ first decreases and then increases, and the minimum is reached at $d_{\text{SJ}}=2.5$m. This means that when the jammer is located in the middle between the source and the destination, more time is allocated to the WPT time-slot to better utilize the cooperative jamming in this case. By contrast, when the jammer is located close to the source or eavesdropper, less time is allocated to the WPT time-slot. Furthermore, the optimal $\alpha_2$ of the optimal and MM approaches are almost the same, while the optimal $\alpha_2$ of the heuristic successive optimization is shorter. This means that the heuristic successive optimization needs longer WPT time. By comparing the Type-I and II receiver cases in Figs. \ref{FigAlpha2Dsj} and \ref{FigAlpha2DsjType2}, it is observed that the optimal $\alpha_2$ in the Type-II receiver case is lower than that in the Type-I receiver case. This is because the jamming signal cancellation ability of the Type-II receiver can fully use the effect of cooperative jamming, and thus the WPT for the jammer in the Type-II receiver case is allocated with more time resource. 

\begin{figure}[!t]
	\centering
	\includegraphics[width=\columnwidth]{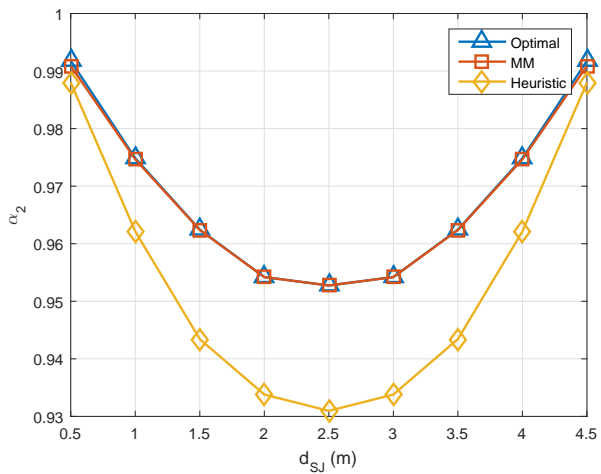}
	\caption{$\alpha_2$ vs. $d_{\text{SJ}}$ when $P_S=35$dBm (Type-I receiver at the destination)}
	\label{FigAlpha2Dsj}
\end{figure}

\begin{figure}[!t]
	\centering
	\includegraphics[width=\columnwidth]{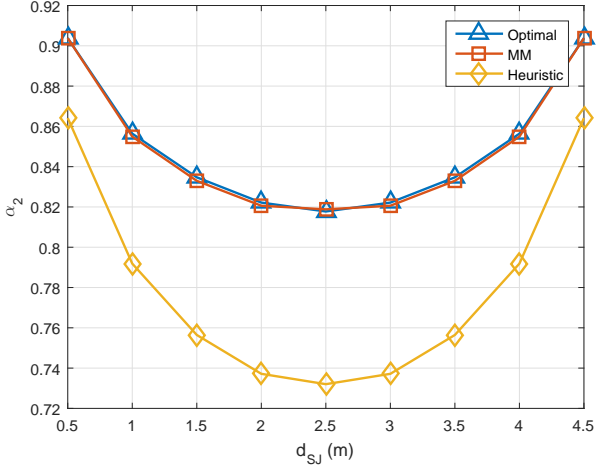}
	\caption{$\alpha_2$ vs. $d_{\text{SJ}}$ when $P_S=35$dBm (Type-II receiver at the destination)}
	\label{FigAlpha2DsjType2}
\end{figure}

\section{Conclusions}
We considered securing the OFDM communication between a source and a destination by exploiting wireless powered cooperative jamming, where a cooperative jammer harvests the wireless energy from the source and then uses the energy to jam the potential eavesdropper when the source is transmitting confidential information to the destination. We jointly design the time lengths and power allocated to WPT, WIT, and jamming to maximize the secrecy rate of the system, where two types of receivers deployed at the destination have been considered. A Lagrange dual approach and an MM approach have been proposed to find the optimal and suboptimal solutions of the joint design problem, and the secrecy rate performance gap between them is very small. A heuristic successive optimization has been further proposed to reduce the joint design complexity, which has slightly lower secrecy rate than the optimal and MM approaches. Simulation results show that joint time and power allocation is effective on improving the secrecy rate of OFDM communication systems.

\appendices

\section{Lagrange Dual Method for Problem \eqref{EquMMIteProb}}    \label{AppenDual}
Let
\begin{equation}
c_n = \frac{  | g_{\text{D},n} |^2 }{ p_{\text{J},n}^{(k)}  | g_{\text{D},n} |^2 +  \sigma_{\text{D}}^2 },
\end{equation}
\begin{equation}   \label{Equd}
d_n = \frac{ | h_{\text{E},n} |^2 }{ p_{\text{IT},n}^{(k)} | h_{\text{E},n} |^2 + p_{\text{J},n}^{(k)}  | g_{\text{E},n} |^2 +  \sigma_{\text{E}}^2 },
\end{equation}
\begin{equation}   \label{Eque}
e_n = \frac{ | g_{\text{E},n} |^2  }{ p_{\text{IT},n}^{(k)} | h_{\text{E},n} |^2 + p_{\text{J},n}^{(k)}  | g_{\text{E},n} |^2 +  \sigma_{\text{E}}^2 }.
\end{equation}
The partial Lagrangian of problem \eqref{EquMMIteProb} is
\begin{align}
& L_{\text{MM}}^{\text{(I)}}  (\mathbf{p}_{\text{PT}}, \mathbf{p}_{\text{IT}}, \mathbf{p}_{\text{J}}, \lambda, \mu)   \nonumber  \\
= &  \sum_{n=1}^N \big[  \ln \left( p_{\text{IT},n} | h_{\text{D},n} |^2 + p_{\text{J},n}  | g_{\text{D},n} |^2 +  \sigma_{\text{D}}^2 \right) \nonumber \\
& + \ln \left( p_{\text{J},n}  | g_{\text{E},n} |^2 +  \sigma_{\text{E}}^2 \right) - c_n p_{\text{J},n} - d_n p_{\text{IT},n} - e_n p_{\text{J},n} \big] \nonumber \\
&  +\lambda  \bigg[ P_S -   (1- \alpha_2)  \sum_{n=1}^{N} p_{\text{PT},n} - \alpha_2 \sum_{n=1}^N p_{\text{IT},n}   \bigg]   \nonumber  \\
& + \mu \bigg[ (1-\alpha_2) \eta \sum_{n=1}^{N} p_{\text{PT},n} | h_{\text{J},n} |^2 - \alpha_2 \sum_{n=1}^N p_{\text{J},n}  \bigg],  \label{EquLagMM1}
\end{align}
where $\lambda \geq 0$ and $\mu \geq 0$ are the dual variables associated with the constraints \eqref{EquInnCon1} and \eqref{EquInnCon3}. 
The dual function is
\begin{subequations}   \label{EquDualFunMM}
\begin{align}
g(\lambda,\mu) = &  \max_{ \mathbf{p}_{\text{PT}}, \mathbf{p}_{\text{IT}}, \mathbf{p}_{\text{J}} } \; L_{\text{MM}}^{\text{(I)}}  (\mathbf{p}_{\text{PT}}, \mathbf{p}_{\text{IT}}, \mathbf{p}_{\text{J}}, \lambda, \mu)  \\
& \quad \; \text{s.t.} \; \quad 0 \leq p_{\text{PT},n}  \leq P_{\text{S,peak}} , \; \forall n,  \label{EquPPTCon} \\
& \quad  \quad \; \; \;  \quad 0 \leq p_{\text{IT},n}  \leq P_{\text{S,peak}} , \; \forall n, \label{EquPITCon} \\
& \quad \quad \; \; \; \quad 0 \leq p_{\text{J},n}  \leq P_{\text{J,peak}} , \; \forall n. \label{EquPJCon}
\end{align}
\end{subequations}
The dual problem of problem \eqref{EquMMIteProb} is
\begin{equation}  \label{EquDualProbMM}
\min_{\lambda, \mu }  \;  g(\lambda,\mu) \; \text{s.t.} \; \lambda \geq 0, \; \mu \geq 0.  \\
\end{equation} 

For given $(\lambda,\mu)$, the optimal $p_{\text{PT},n}$ to problem \eqref{EquDualFunMM} is given in \eqref{EquOptPPT}. The optimal $p_{\text{IT},n}$ and $p_{\text{J},n}$ to problem \eqref{EquDualFunMM} can be obtained in closed-form as follows. Take derivative of the objective function of \eqref{EquDualFunMM} with respect to $p_{\text{IT},n}$ and set it to zero, we can find the relation between $p_{\text{IT},n}$ and $p_{\text{J},n}$:
\begin{equation}   \label{EquPITPJ}
p_{\text{IT},n} = -\frac{ |g_{\text{D},n}|^2 }{ |h_{\text{D},n}|^2 } p_{\text{J},n} - \frac{ \sigma_{\text{D}}^2 }{ |h_{\text{D},n}|^2 } + \frac{1}{d_n + \lambda \alpha_2}.
\end{equation}
Take derivative of the objective function of \eqref{EquDualFunMM} with respect to $p_{\text{J},n}$, we have
\begin{align}
\frac{\partial L_{\text{MM}}^{\text{(I)}}}{\partial p_{\text{J},n}} = & \frac{  | g_{\text{D},n} |^2 }{ p_{\text{IT},n} | h_{\text{D},n} |^2 + p_{\text{J},n}  | g_{\text{D},n} |^2 +  \sigma_{\text{D}}^2 } + \frac{ | g_{\text{E},n} |^2}{p_{\text{J},n}  | g_{\text{E},n} |^2 +  \sigma_{\text{E}}^2}   \nonumber \\
& -c_n - e_n - \mu \alpha_2.  \label{EquDLPJ}
\end{align}
We substitute \eqref{EquPITPJ} to \eqref{EquDLPJ} and set it to zero, and then find the solution of $p_{\text{J},n}$ as
\begin{equation}  \label{EquMMPJ}
\tilde{p}_{\text{J},n} = \frac{ 1 }{ c_n + e_n + \mu \alpha_2 - \frac{ |g_{\text{D},n}|^2 }{ |h_{\text{D},n}|^2 } (d_n + \lambda \alpha_2)  } - \frac{\sigma_{\text{E}}^2}{ |g_{\text{E},n}|^2 }. 
\end{equation}
Substituting \eqref{EquMMPJ} to \eqref{EquPITPJ}, we find the solution of $p_{\text{IT},n}$ as
\begin{align}   \label{EquMMPIT}
\tilde{p}_{\text{IT},n} = & \frac{1}{ d_n + \lambda \alpha_2 - \frac{ |h_{\text{D},n}|^2 }{| g_{\text{D},n} |^2} ( c_n + e_n + \mu \alpha_2 ) } + \frac{ | g_{\text{D},n} |^2 \sigma_{\text{E}}^2 }{ | h_{\text{D},n} |^2 | g_{\text{E},n} |^2 }      \nonumber \\
&   + \frac{1}{d_n + \lambda \alpha_2}  - \frac{\sigma_{\text{D}}^2}{|h_{\text{D},n}|^2}.
\end{align} 
Thus, the optimal solution of $p_{\text{IT},n}$ and $p_{\text{J},n}$ to problem \eqref{EquDualFunMM} is 
\begin{equation}  \label{EquMMOptPJ}
p_{\text{J},n}^* = \min \left( [ \tilde{p}_{\text{J},n} ]^+ , P_{\text{J,peak}} \right),
\end{equation}
\begin{equation} \label{EquMMOptPIT}
p_{\text{IT},n}^* = \min \left( [ \tilde{p}_{\text{IT},n} ]^+ , P_{\text{S,peak}} \right).
\end{equation}
Equ. \eqref{EquMMPJ}--\eqref{EquMMOptPIT} show that the solutions of $p_{\text{J},n}$'s and $p_{\text{IT},n}$'s to problem \eqref{EquDualFunMM} follow a water-filling structure with different water-levels across different sub-carriers.

To solve problem \eqref{EquDualProbMM}, the pair $(\lambda,\mu)$ can be updated by applying the ellipsoid method \cite{Boyd2004}. The required subgradients for updating $\lambda$ and $\mu$ are given by \eqref{EquSubgradLambda} and \eqref{EquSubgradMu}. With the optimal solution to \eqref{EquDualProbMM}, denoted as $\lambda^*$ and $\mu^*$, the optimal $p_{\text{J},n}$'s and $p_{\text{IT},n}$'s corresponding to $\lambda^*$ and $\mu^*$ are the optimal solution to problem \eqref{EquMMIteProb}. Then the optimal $p_{\text{PT},n}$'s to problem \eqref{EquMMIteProb} can be obtained by \eqref{EquOptPPTFinal}. The description of the overall method is similar as Algorithm \ref{AlgOptType1} and is thus omitted here for brevity.

\section{Proof of Lemma \ref{LemmaSJ}}  \label{Appen1}
First, we have the following fact that for arbitrary $c,d >0$, $(1+c)/(1+d)$ increases with $c/d$, which is proved as follows. Without loss of generality, we can increase $c/d$ by fixing $d$ and increasing $c$. It is observed that $(1+c)/(1+d)$ will also increase with $c$, when $d$ is fixed. So $(1+c)/(1+d)$ increases with $c/d$.
	
Then, we write the objective function of \eqref{EquProb3} into the following form
\begin{equation}  \label{EquProb3Obj}
	\ln \frac{ 1 + \left( p_{\text{IT},n} | h_{\text{D},n} |^2 \right) / \left( p_{\text{J},n}  | g_{\text{D},n} |^2 +  \sigma_{\text{D}}^2    \right)  }  { 1 + \left( p_{\text{IT},n} | h_{\text{E},n} |^2  \right) / \left( p_{\text{J},n}  | g_{\text{E},n} |^2 +   \sigma_{\text{E}}^2  \right) }.
\end{equation}
The previously proved fact tells that \eqref{EquProb3Obj} will increase with
\begin{equation}  \label{EquProb3ObjPart}
\begin{split}
	&\frac{  \left( p_{\text{IT},n} | h_{\text{D},n} |^2 \right) / \left( p_{\text{J},n}  | g_{\text{D},n} |^2 +  \sigma_{\text{D}}^2    \right)  }  {  \left( p_{\text{IT},n} | h_{\text{E},n} |^2  \right) / \left( p_{\text{J},n}  | g_{\text{E},n} |^2 +   \sigma_{\text{E}}^2  \right) } \\
	=& \frac{ | h_{\text{D},n} |^2 | g_{\text{E},n} |^2 } { | h_{\text{E},n} |^2 | g_{\text{D},n} |^2 } \left( 1 + \frac{ \frac{ \sigma_{\text{E}}^2 }{ | g_{\text{E},n} |^2 } - \frac{ \sigma_{\text{D}}^2 }{ | g_{\text{D},n} |^2 } } { p_{\text{J},n} + \frac{ \sigma_{\text{D}}^2 }{ | g_{\text{D},n} |^2 } }  \right).
\end{split}	
\end{equation}
Note that \eqref{EquProb3ObjPart} only increases with $p_{\text{J},n}$ when $ \sigma_{\text{E}}^2 /  | g_{\text{E},n} |^2 - \sigma_{\text{D}}^2 / | g_{\text{D},n} |^2 < 0$. Hence, increasing $p_{\text{J},n}$ can increase the objective function of \eqref{EquProb3} when $ | g_{\text{E},n} |^2 / \sigma_{\text{E}}^2  > | g_{\text{D},n} |^2 / \sigma_{\text{D}}^2 $.

\section{The Upper Bound of $\vartheta$}   \label{Appen2}
By taking derivative of the Lagrangian of problem \eqref{EqProbLowComPIT} and setting it to zero, we have the following equation 
\begin{equation}   \label{EquQuaEquation}
a_n b_n p_{\text{IT},n}^2 + (a_n+b_n) p_{\text{IT},n} - \frac{a_n-b_n}{\vartheta } +1 =0,
\end{equation}
where $\vartheta \in [0,\vartheta_{\max}]$ is the dual variable associated with the sum power constraint $\sum_{n=1}^N p_{\text{IT},n} \leq P_S$. The value of $\vartheta_{\max}$ can be obtained as follows. 

The $\tilde{p}_n$ in \eqref{EquPITLowSol2} is the positive root of \eqref{EquQuaEquation}. Since $\tilde{p}_n$ is a decreasing function of $\vartheta$, a sufficient large $\vartheta$ can make $\tilde{p}_n$ negative for all $n \in \mathcal{S}_{\text{IT}}$. According to \eqref{EquPIT}, negative $\tilde{p}_n$ makes $p_{\text{IT},n}=0$, which is obviously not the optimal solution of \eqref{EqProbLowComPIT}. Hence $\vartheta$ should be bounded above to make sure at least one $\tilde{p}_n$, $n \in \mathcal{S}_{\text{IT}}$ is positive. 

Note that the sum of the roots of equation \eqref{EquQuaEquation}, i.e. $-(a_n + b_n)/(a_n b_n)$, is negative, the condition that the equation has positive root is equivalent to the condition that the product of the roots is negative, i.e.
\begin{equation}
\frac{ - \frac{a_n-b_n}{\vartheta } +1 } { a_n b_n } < 0 \; \Rightarrow \; \vartheta  < a_n - b_n, \; \; n \in \mathcal{S}_{\text{IT}}.
\end{equation}
So 
\begin{equation}
\vartheta_{\max} = \max_{n \in \mathcal{S}_{\text{IT}}} (a_n - b_n).
\end{equation}

\section{Lagrange Dual Method for Problem \eqref{EquMMIteProbType2}}     \label{AppenDualType2}
The procedure of the method is the similar with that shown in Appendix \ref{AppenDual}, and the only differences are the expressions of the Lagrangian, the dual function, and the solution of $p_{\text{IT},n}$ and $p_{\text{J},n}$. We only show the differences here. 

The partial Lagrangian of problem \eqref{EquMMIteProbType2} is
\begin{align}
& L_{\text{MM}}^{\text{(II)}}  (\mathbf{p}_{\text{PT}}, \mathbf{p}_{\text{IT}}, \mathbf{p}_{\text{J}}, \lambda, \mu)   \nonumber  \\
= &  \sum_{n=1}^N \big[  \ln \left( p_{\text{IT},n} | h_{\text{D},n} |^2 +  \sigma_{\text{D}}^2 \right) + \ln \left( p_{\text{J},n}  | g_{\text{E},n} |^2 +  \sigma_{\text{E}}^2 \right) \nonumber \\
&   - d_n p_{\text{IT},n} - e_n p_{\text{J},n} \big] \nonumber \\
&  +\lambda  \bigg[ P_S -   (1- \alpha_2)  \sum_{n=1}^{N} p_{\text{PT},n} - \alpha_2 \sum_{n=1}^N p_{\text{IT},n}   \bigg]   \nonumber  \\
& + \mu \bigg[ (1-\alpha_2) \eta \sum_{n=1}^{N} p_{\text{PT},n} | h_{\text{J},n} |^2 - \alpha_2 \sum_{n=1}^N p_{\text{J},n}  \bigg],  \label{EquLagMM2}
\end{align}
where $d_n$ and $e_n$ are defined as \eqref{Equd} and \eqref{Eque}, and $\lambda \geq 0$ and $\mu \geq 0$ are the dual variables associated with constraints \eqref{EquInnCon1} and \eqref{EquInnCon3}. The dual function is
\begin{align}
	g(\lambda,\mu) = &  \max_{ \mathbf{p}_{\text{PT}}, \mathbf{p}_{\text{IT}}, \mathbf{p}_{\text{J}} } \; L_{\text{MM}}^{\text{(II)}}  (\mathbf{p}_{\text{PT}}, \mathbf{p}_{\text{IT}}, \mathbf{p}_{\text{J}}, \lambda, \mu)  \label{EquDualFunMMType2}  \\
	& \quad \; \text{s.t.} \; \; \quad \eqref{EquPPTCon} ,  \; \eqref{EquPITCon} , \; \eqref{EquPJCon}.  \nonumber
\end{align}

By taking derivatives of the objective value of problem \eqref{EquDualFunMMType2} with respect to $p_{\text{IT},n}$ and $p_{\text{J},n}$, respectively, and setting them to zero, we can find the optimal $p_{\text{IT},n}$ and $p_{\text{J},n}$ to problem \eqref{EquDualFunMMType2} as
\begin{equation}   \label{EquPITMMType2}
p_{\text{IT},n}^* = \min \left( [ \hat{p}_{\text{IT},n} ]^+ , P_{\text{S,peak}} \right),
\end{equation}
\begin{equation}  
p_{\text{J},n}^* = \min \left( [ \hat{p}_{\text{J},n} ]^+ , P_{\text{J,peak}} \right),
\end{equation}
where
\begin{equation}
\hat{p}_{\text{IT},n} = \frac{1}{d_n + \lambda \alpha_2} - \frac{\sigma_{\text{D}}^2}{|h_{\text{D},n}|^2},
\end{equation}
\begin{equation}  \label{EquPJMMType2}
\hat{p}_{\text{J},n} = \frac{1}{e_n + \mu \alpha_2} - \frac{\sigma_{\text{E}}^2}{|g_{\text{E},n}|^2}.
\end{equation}
Equ. \eqref{EquPITMMType2}--\eqref{EquPJMMType2} show that the solutions of $p_{\text{J},n}$'s and $p_{\text{IT},n}$'s to problem \eqref{EquDualFunMMType2} follow a water-filling structure with different water-levels across different sub-carriers.

\end{document}